\newcommand{\comment}[1]{}
\theoremstyle{definition}
\newtheorem{theorem}{Theorem}
\newtheorem{lemma}[theorem]{Lemma}
\newtheorem{definition}[theorem]{Definition}
\newtheorem{corollary}[theorem]{Corollary}
\DeclareMathOperator{\lcm}{lcm}
\DeclareMathOperator{\idtransSym}{trans}
\DeclareMathOperator{\nerveSym}{nerve}
\DeclareMathOperator{\hilbSym}{hilb}
\DeclareMathOperator{\hilbNumSym}{H}
\DeclareMathOperator{\facetsSym}{fac}
\newcommand{\hps}{Hilbert-Poincar\'e series}
\def\cocoa{{\hbox{\rm C\kern-.13em o\kern-.07em C\kern-.13em o\kern-.15em A}}}
\newcommand{\N}{\mathbb{N}}
\newcommand{\trans}[1]{\ensuremath{{#1}^T}}
\newcommand{\p}{\ensuremath{^\prime}}
\newcommand{\pp}{\ensuremath{^{\prime\prime}}}
\newcommand{\biimp}{\Leftrightarrow}
\newcommand{\defeq}{\stackrel{\text{\tiny def}}{=}}
\newcommand{\setBuilder}[2]{\left\{{#1}\left|{#2}\right.\right\}}
\newcommand{\idealBuilder}[2]{\left\langle#1\left|#2\right.\right\rangle}
\newcommand{\card}[1]{\left|#1\right|}
\newcommand{\varProd}{\mathbbm{x}}
\newcommand{\proofPart}[1]{{\bf $\boldsymbol{#1}$:}}
\newcommand{\hilb}[1]{\hilbSym\left({#1}\right)}
\newcommand{\hilbNum}[1]{\hilbNumSym\left({#1}\right)}
\newcommand{\ming}[1]{\min\left({#1}\right)}
\newcommand{\ideal}[1]{\left<#1\right>}
\newcommand{\set}[1]{\left\{{#1}\right\}}
\newcommand{\facets}[1]{\facetsSym\left({#1}\right)}
\newcommand{\koz}[2]{\Delta_{#2}^{#1}}
\newcommand{\euler}[1]{\tilde\chi\left({#1}\right)}
\newcommand{\vertices}[1]{V_{#1}}
\newcommand{\NP}{\ensuremath{\text{\tt NP}}}
\newcommand{\NPhard}{\NP\ensuremath{\text{-hard}}}
\newcommand{\NPc}{\NP\ensuremath{\text{-complete}}}
\newcommand{\coNP}{\ensuremath{\text{\tt co-NP}}}
\newcommand{\coNPhard}{\coNP\ensuremath{\text{-hard}}}
\newcommand{\sP}{\ensuremath{\text{\#\tt P}}}
\newcommand{\sPc}{\ensuremath{\sP{}\text{-complete}}}
\newcommand{\sPhard}{\ensuremath{\sP{}\text{-hard}}}
\newcommand{\SAT}{\ensuremath{\text{\tt SAT}}}
\newcommand{\satCount}{{\tt $\#$\SAT{}}}
\newcommand{\countOf}[1]{\#\left(\textrm{#1}\right)}
\newcommand{\comple}[1]{\overline{#1}}
\newcommand{\pows}[1]{\mathcal{P}\left({#1}\right)}
\newcommand{\vsub}{\ominus}
\newcommand{\comtoidSym}{\phi}
\newcommand{\idtocomSym}{\phi^{-1}}
\newcommand{\comtoid}[1]{\comtoidSym\left({#1}\right)}
\newcommand{\idtocom}[1]{\idtocomSym\left({#1}\right)}
\newcommand{\clo}[1]{\ideal{#1}}
\newcommand{\simProd}{\oplus}
\newcommand{\nerve}[1]{\nerveSym\left({#1}\right)}
\newcommand{\EulerChar}{\ensuremath{\text{\tt EulerChar}}}
\newcommand{\IndepSum}{\ensuremath{\text{\tt IndepSum}}}
\newcommand{\Ezero}{\ensuremath{\text{\tt E}_{\boldsymbol 0}}}
\newcommand{\Eeq}{\ensuremath{\text{\tt E}_{\boldsymbol =}}}
\newcommand{\Elt}{\ensuremath{\text{\tt E}_{\boldsymbol <}}}
\newcommand{\Egt}{\ensuremath{\text{\tt E}_{\boldsymbol >}}}
\newcommand{\Egtzero}{\ensuremath{\text{\tt E}_{\boldsymbol >0}}}
\newcommand{\idtomat}[1]{M_{#1}}
\newcommand{\mattoid}[1]{\ideal{#1}}
\newcommand{\idtrans}[1]{\idtransSym\left({#1}\right)}
\newcommand{\pname}[1]{{\tt{#1}}}
\newcommand{\Frobby}{{\tt Frobby}}
\newcommand{\GAP}{{\tt GAP}}
\newcommand{\Sage}{{\tt Sage}}
\newcommand{\Mtwo}{{\tt Macaulay 2}}
\title{Complexity and Algorithms for Euler Characteristic of
  Simplicial Complexes}
\author{Bjarke Hammersholt Roune\ \ \ \  \and\ \ \  Eduardo S\'aenz-de-Cabez\'on}
\date{\today{}}
\address{\small \rm  Cornell University, \url{http://www.broune.com}}
\email{bhroune@math.cornell.edu}
\address{\small \rm  Universidad de la Rioja, \url{https://belenus.unirioja.es\~esaenz-d}}
\email{eduardo.saenz-de-cabezon@unirioja.es}
\begin{document}
\begin{abstract}
We consider the problem of computing the Euler characteristic of an
abstract simplicial complex given by its vertices and facets. We show
that this problem is {\tt\#P}-complete and present two new practical
algorithms for computing Euler characteristic. The two new algorithms
are derived using combinatorial commutative algebra and we also give a
second description of them that requires no algebra. We present
experiments showing that the two new algorithms can be implemented to
be faster than previous Euler characteristic implementations by a
large margin.
\end{abstract}

\maketitle

\section{Introduction}

The Euler characteristic of a topological space is an invariant used
in a variety of contexts such as category theory, algebraic geometry
and differential geometry. In combinatorics, the Euler characteristic
of a simplicial complex is related to the M\"obius function of a poset
and the inclusion-exclusion principle \cite{S97} and to valuations on
simplicial complexes \cite{KR97} to name but a few connections.

The \emph{reduced Euler characteristic} of an abstract simplicial
complex\footnote{An \emph{abstract simplicial complex} $\Delta$ is a
  family of sets closed under taking subset, so if $\sigma\in\Delta$
  and $\tau\subseteq\sigma$ then $\tau\in\Delta$. This is closely
  related to the notion of a \emph{simplicial complex} which is a set
  of polyhedra with certain properties. All complexes in this paper
  are abstract. See Section \ref{sec:simBackground} for further
  background.} $\Delta$ is
\[
\euler\Delta\defeq-\sum_{\sigma\in\Delta}(-1)^{\card\sigma}=-f_{-1}+f_0-f_1+f_2-f_3+\cdots
\]
where $f_i$ denotes the number of faces (elements) of dimension $i$ in
the complex.\footnote{The usual definition of Euler characteristic is
  $\chi(\Delta)\defeq f_0-f_1+f_2-f_3+\cdots$. The difference is that
  $\chi(\Delta)$ does not count the empty set, while $\euler\Delta$
  does, so $\euler\Delta=\chi(\Delta)-1$. All Euler characteristics in
  this paper are $\euler\Delta$ rather than $\chi(\Delta)$ because
  that simplifies the formulas.} The dimension of a face $\sigma$ is
$\dim(\sigma)\defeq\vert\sigma\vert-1$.

In Section \ref{sec:complexity} we prove that computing the Euler
characteristic of a simplicial complex specified by its vertices and
facets is \sPc{}, which is a formal way of stating that Euler
characteristic is a difficult computational problem. We also show that
the problem of deciding if $\euler\Delta=0$ is not in \NP{} unless
\sP{} is no harder than \NP{}. This answers two open questions posed
by Kaibel and Pfetsch in their survey \cite{simplicialSurvey}.

In Section \ref{sec:algAlg} we introduce two new practical algorithms
for computing Euler characteristic. These two algorithms were
conceived of in terms of combinatorial commutative algebra, and
Section \ref{sec:algAlg} is written solely in terms of
algebra. Section \ref{sec:simAlg} independently describes the same two
algorithms in terms of simplicial complexes and without any reference
to algebra. Section \ref{sec:translation} describes how the algebra
was translated to simplicial complexes and how doing so brought up
interesting mathematics. Finally, Section \ref{sec:bench} presents
experiments that show that the two new algorithms can be implemented
to be faster than previous Euler characteristic implementations by a
large margin.

\comment{
The fastest algorithm for computing the Euler characteristic of a
simplicial complex is \cite{faceLattice} according to the survey paper
\cite{simplicialSurvey}. This algorithm uses the partially ordered set
$V\defeq\setBuilder{\cap(S)}{S\subseteq\facets{\Delta}}$, where
$\facets(\Delta)$ is the set of facets of $\Delta$, and then computes
$\euler\Delta$ in time $ O({\card V}^2)$ using the M\"obius function
of the poset.  In the context of computer algebra systems, the only
algorithms that can be used to compute Euler characteristic are the
algorithm in the GAP package {\tt simpcomp} \cite{simpcomp} or the
computation of the $f$-vector of a simplicial complex $\Delta$ given
in \Mtwo{} \cite{m2}. From the $f$-vector $f=(f_{-1},f_0,f_1,\dots)$
one immediately obtains $\euler\Delta$.

The paper \cite{simplicialSurvey} poses the open problem of
determining the complexity class of the Euler characteristic of a
simplicial complex given by its vertices and facets. In this paper we
study the complexity of computing the Euler characteristic of a
simplicial complex and we present two new algorithms for the problem
based on combinatorial commutative algebra. The reader may notice that
even if we use commutative algebra concepts to describe the algorithms
in this paper, they are essentially combinatorial algorithms, which
could be described in terms of simplicial complexes and their homology
groups.

The paper adresses the determination of the complexity class of the
computation of the Euler characteristic for simplicial complexes in
Section \ref{sec:complexity}. Section XXX describes the two
algorithms that we propose for the computation of Euler characteristic
based on commutative algebra methods. Finally, Section \ref{sec:bench}
contains some relevant implementation details and benchmarking.
}

\section{The Complexity of Euler Characteristic}
\label{sec:complexity}

We describe the complexity class \sP{} and then prove that Euler
characteristic is \sPc{}. This is a precise way of saying that Euler
characteristic is a difficult computational problem. We also consider
the complexity of decision problems associated to Euler
characteristic.  See Section \ref{sec:simBackground} for basic
definitions relating to simplicial complexes.

The complexity of Euler characteristic has been studied before, but
not of a simplicial complex specified by its vertices and facets. It
has been studied for the case of the input being a CW-complex
specified as a circuit \cite{eulerGeom} in the context of real valued
computation and for the input being a sheaf \cite{eulerSheaf} in the
context of algebraic geometry.

\subsection{The complexity class \sP{}}

The complexity class \sP{} is the set of counting problems associated
to decision problems in $\NP$. For example the decision problem ``does
a logical formula have \emph{some} satisfying assignment of
truth-values?''  is in $\NP$, while ``\emph{how many} satisfying
assignments of truth-values does a logical formula have?'' is in
\sP{}. The former is called \SAT{} while the latter is called
\satCount{}. A problem is \sPc{} if it is in \sP{} and any other
problem in \sP{} can be reduced to it in polynomial time.

There is already a list of problems that are known to be \sPc{}, which
is very helpful when proving that a new problem is \sPc{}, as then a
problem in \sP{} is \sPc{} if some other \sPc{} problem reduces to
it. For example it is known that \satCount{} is \sPc{} even when
restricted to formulas with two literals per clause and no negations
\cite{enumerationSP}. A \SAT{} formula is a conjunction of clauses,
where each clause is a disjunction of some number of literals. For
example
\[
(a\lor \lnot b)\land(a\lor c)\land(\lnot b\lor c)
,\]
where $a$, $b$ and $c$ are boolean variables. Here the
satisfying truth assignments $(a,b,c)$ are
\[
\left\{
(0, 0, 1),
(1, 0, 1),
(1, 1, 0),
(1, 1, 1)
\right\}.
\]
The output for \SAT{} with this formula as input is ``yes'' since
there is a satisfying truth assignment. The output for \satCount{} is
``4'', since there are four satisfying truth assignments. The output
for \satCount{} does not include the satisfying truth assignments
themselves, only the number of them.

The program for the rest of this section is to formally define a
problem \EulerChar{} in \sP{} that represents the Euler characteristic
problem, and then to prove that \EulerChar{} is \sPc{}.

\subsection{Euler Characteristic is in \sP{}}
\label{sec:eulerInSP}

The most straigtforward way to define \EulerChar{} would be to have the
input be the facets and vertices of a simplicial complex $\Delta$ and
have the output be simply $\euler\Delta$. It is immediate that this
could never be in \sP{} because $\euler\Delta$ can be negative while
\sP{} is a class of \emph{counting} problems so that their output must
be a natural number.

To arrive at a satisfactory definition of \EulerChar{}, the first step is
to observe that
\begin{equation}
\label{eq:eulerEvenOdd}
\euler\Delta =
\sum_{\sigma\in\Delta}(-1)^{\dim(\sigma)}=
\countOf{odd faces}-\countOf{even faces}.
\end{equation}
Read $\countOf{odd faces}$ as ``the number of odd faces of $\Delta$'',
where a set is odd if it has an odd number of elements which is to say
that its dimension is even. It is not hard to argue that counting the
number of even faces is a problem in \sP{}, and that counting the
number of odd faces is a problem in \sP{} as well. Unfortunately, we
know of no theorem stating that a difference of two functions in \sP{}
is again in \sP{}. So we must find an alternative way to express the
Euler characteristic.

Let $\Delta$ have $n$ vertices. Then
\[
\countOf{even faces}+\countOf{even non-faces}=\countOf{even sets}=2^{n-1},
\]
which together with Equation (\ref{eq:eulerEvenOdd}) implies that
\[
\euler\Delta+2^{n-1} =
\countOf{odd faces}+\countOf{even non-faces}.
\]
Consider the decision problem ``does $\Delta$ have an odd face or an
even non-face?''. This problem is in \NP{} where a certificate of a
``yes''-answer is any concrete odd face or even non-face. Define
\EulerChar{} to be the counting version of this. The input is then the
vertices and facets of $\Delta$ and the output is the number of odd
faces and even non-faces, that is the output is
$\euler\Delta+2^{n-1}$. We can subtract $2^{n-1}$ in polynomial time,
which justifies that \EulerChar{} represents the problem of computing the
Euler characteristic of a simplicial complex.

We conclude that computing Euler characteristic is a problem in \sP{}
when expressed formally in the form of the \EulerChar{} problem.

\subsection{Euler Characteristic is \sP{}-complete}
The main result in this section is Theorem \ref{thm:eulerSPC} which
states that \EulerChar{} is \sPc{}. This is an example of the fact that
even trivial problems can have a counting version that is \sPc{}. To
see that the problem ``does $\Delta$ have an odd face or an even
non-face'' is especially trivial, observe that only $\Delta=\emptyset$
fails to have the even face $\emptyset$.

\begin{theorem}
\label{thm:eulerSPC}
\EulerChar{} is \sPc{}. That is, the problem of computing the Euler
cha\-rac\-te\-ris\-tic of a simplicial complex given by its vertices
and facets is \sPc{}.
\end{theorem}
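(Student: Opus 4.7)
The plan is to prove hardness by reducing from the problem of counting independent sets of a graph, which is equivalent to \satCount{} restricted to monotone two-literal clauses and is therefore \sPhard{} by \cite{enumerationSP}. Given a graph $G=(V,E)$ with $n := \card V$, the goal is to compute the total number of independent sets $I(G,1) = \sum_{S \text{ indep}}1$ using polynomially many queries to an \EulerChar{} oracle.

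The key idea is a cloning construction. For each positive integer $t$, I would form the graph $G^{[t]}$ on $V \times [t]$ in which the $t$ clones of each vertex induce a clique and $(u,i)\sim(v,j)$ whenever $\{u,v\}\in E$. Every independent set of $G^{[t]}$ picks at most one clone of each vertex in some independent set of $G$, which yields the identity $I(G^{[t]},x)=I(G,tx)$ for the independence polynomial. I would then encode $G^{[t]}$ as the simplicial complex $\Delta_t$ on vertex set $V(G^{[t]})$ whose facets are $V(G^{[t]})\setminus e$ as $e$ ranges over the edges of $G^{[t]}$. Its vertex count and number of facets are polynomial in $\card V$, $\card E$ and $t$.

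The substitution $T := V(G^{[t]})\setminus S$, together with $\sum_{T\subseteq[N]}(-1)^{\card T}=0$ for $N\geq 1$, then gives $\euler{\Delta_t} = (-1)^{nt}I(G,-t)$. Since $I(G,x)$ is a polynomial of degree at most $n$, querying \EulerChar{} on $\Delta_1,\ldots,\Delta_{n+1}$ and correcting signs recovers $n+1$ values $I(G,-1),\ldots,I(G,-(n+1))$. Lagrange interpolation then reconstructs the polynomial $I(G,x)$ from those values, and evaluating at $x=1$ yields the number of independent sets of $G$.

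The conceptual obstacle is that Euler characteristic is intrinsically an alternating sum, and there is no obvious way to encode the unsigned count $I(G,1)$ as a single Euler characteristic. The cloning trick sidesteps this by making $t$ a free parameter that slides the evaluation point of $I(G,\cdot)$, so the ``sign problem'' becomes a univariate polynomial in $t$ recoverable through polynomially many oracle queries followed by standard interpolation. The only step needing a careful check is the identity $\euler{\Delta_t}=(-1)^{nt}I(G,-t)$, which follows directly from the bijection between faces of $\Delta_t$ and subsets of $V(G^{[t]})$ whose complement contains some edge of $G^{[t]}$. Combined with the \sP{} membership established in Section~\ref{sec:eulerInSP}, this reduction proves that \EulerChar{} is \sPc{}.
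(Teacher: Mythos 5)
Your reduction is correct, but it follows a genuinely different route from the paper. Both arguments share the same first move: encoding a graph $G$ as the complex whose facets are the complements of the edges, so that $\euler\Delta$ equals, up to the sign $(-1)^{\card V}$, the signed (parity) sum of independent sets --- i.e.\ the independence polynomial evaluated at $-1$. The paper then attacks the sign problem head-on: it reduces \satCount{} to this signed sum via an explicit gadget (a triangle $T_i,F_i,D_i$ per variable and a vertex per clause), showing that all fibers of the map ``forget the $D_i,C_j$ vertices'' cancel except those corresponding to satisfying assignments, so a \emph{single} \EulerChar{} query returns exactly the number of satisfying assignments. You instead start from counting independent sets (correctly identified as \sPhard{} via monotone two-clause \satCount{} \cite{enumerationSP}) and sidestep the sign problem by clique blow-ups: the identity $I(G^{[t]},x)=I(G,tx)$ and your computation $\euler{\Delta_t}=(-1)^{nt}I(G,-t)$ are both right, the complexes have polynomial size for $t\le n+1$, and Lagrange interpolation of the degree-$\le n$ polynomial $I(G,\cdot)$ from the $n+1$ queried values recovers $I(G,1)$. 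This is a polynomial-time Turing reduction with $n+1$ oracle calls rather than the paper's one-call reduction; under the standard notion of \sP{}-hardness that is perfectly acceptable, and your argument is arguably more modular, replacing gadget verification by a routine interpolation step standard in counting complexity. What the paper's construction buys in exchange is an exact identity $\euler\Delta=\countOf{satisfying truth assignments}$ for a single complex, which is reused later: the proof that \Ezero{} is \coNPhard{} (Theorem \ref{thm:eulerZero}) rests on $\euler\Delta\neq 0$ iff the formula is satisfiable, and your interpolation-based reduction does not directly supply such a complex, so it proves Theorem \ref{thm:eulerSPC} but would need supplementing to support that downstream use.
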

\begin{proof}
We proved in Section \ref{sec:eulerInSP} that \EulerChar{} is in
\sP{}. We prove the statement of the theorem by showing that the
\sPc{} problem \satCount{} reduces to \EulerChar{}. We introduce an
intermediate problem \IndepSum{} and prove that \satCount{} reduces to
\IndepSum{} and then that \IndepSum{} reduces to \EulerChar{}. Given a \SAT{}
formula $S$, the combination of these two reductions yields a
simplicial complex $\Delta$ such that $\euler\Delta$ is the number of
truth assignments that satisfy $S$.

We need to introduce some terminology. Let the \emph{parity sum} of a
set of sets $S$ be $P(S)\defeq\sum_{s\in S}(-1)^{\card s}$. For
example the parity sum of a simplicial complex $\Delta$ is
$P(\Delta)=-\euler\Delta$. Let $G$ be a simple graph with vertex set
$V$. Then a set of vertices $S\subseteq V$ is \emph{dependent} if it
contains both endpoints of some edge of $G$. Otherwise $S$ is
\emph{independent}.

We can now define the problem \IndepSum{}. The input of \IndepSum{} is the
vertices and edges of a graph $G$, and the output is the parity sum of
the set of independent sets of $G$.

\proofPart{\text{\IndepSum{} reduces to \EulerChar{}}} Let $G$ be a simple
graph with vertex set $V$ and define a simplicial complex $\Delta$
such that the facets of $\Delta$ are the complements of the edges of
$G$. Then a set of vertices is a face of $\Delta$ if and only if the
complement contains an edge, that is if and only if the complement is
a dependent set of $G$.

If $\sigma\subseteq V$ then let $\comple \sigma\defeq V\setminus
\sigma$ be its complement. Let $D$ be the set of dependent sets of $G$
or equivalently $D=\setBuilder{\comple
  \sigma}{\sigma\in\Delta}$. Since $(-1)^{\card{\comple
    \sigma}}=(-1)^n(-1)^{\card \sigma}$ and $-\euler\Delta$ is the
parity sum of the faces of $\Delta$, we get that
\[
P(D)=(-1)^nP(\Delta)=-(-1)^n\euler\Delta.
\]

Let $I$ be the set of independent sets. Every set is either dependent
or independent, so assuming that $V\neq\emptyset$ we get that ($\pows
V$ is the set of all subsets of $V$)
\[
P(I)+P(D)=P(I\cup D)=P(\pows V)=0.
\]
We conclude that $P(I)=-P(D)=(-1)^n\euler\Delta$ so that \IndepSum{}
reduces to \EulerChar{}.

\proofPart{\text{\satCount{} reduces to \IndepSum{}}} Let $S$ be a \SAT{}
formula. We construct a graph $G$ such that the number of truth
assignments that satisfy $S$ equals $(-1)^n$ times the parity sum of
the independent sets of $G$.

\begin{figure}[!b]
\centering
\includegraphics[width=0.9\textwidth]{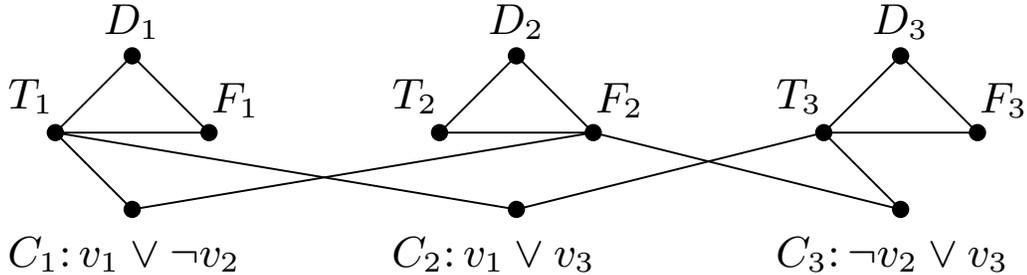}
\caption{Illustration for the proof of Theorem \ref{thm:eulerSPC}.}
\label{fig:satEulerGraph}
\end{figure}

Let $v_1,\ldots,v_n$ be the variables that appear in the formula $S$
and let $c_1,\ldots,c_k$ be the clauses that appear in $S$.  For each
variable $v_i$ we introduce a 3-clique with vertices $T_i$, $F_i$ and
$D_i$. Here $T_i$ represents $v_i$ having the value true and $F_i$
represents false. For each clause $c_j$ we introduce a vertex
$C_j$. If the literal $v_i$ appears in clause $c_j$ with no negation,
then we add an edge between $T_i$ and $C_j$. If the literal $\lnot
v_i$ appears in clause $c_j$ then we add an edge between $F_i$ and
$C_j$. We claim that the number of truth assignments that satisfy $S$
equals $(-1)^n$ times the parity sum of the independent sets of this
graph $G$.

For concreteness, consider the \SAT{} formula
\[
(v_1\lor \lnot v_2)\land(v_1\lor v_3)\land(\lnot v_2\lor v_3).
\]
The graph that we construct based on this formula is shown in Figure
\ref{fig:satEulerGraph}.

Let $A$ be the set of vertices named $D_i$ or $C_j$ and let $B$ be the
set of vertices named $T_i$ or $F_j$. Let $I$ be the set of
independent sets of $G$ and let $I_B$ be the set of independent sets
that are subsets of $B$. Define the function $p\colon I\rightarrow
I_B$ by $p(d)\defeq d\setminus A$.

We are going to prove that $i)$ if $p^{-1}(d)=\set{d}$ then $\card
d=n$ and $ii)$ that the set of such $d$ is in bijection with the set
of truth assignments that satisfy $S$. We are also going to prove
$iii)$ that if $p^{-1}(d)\neq\set d$ then the parity sum
$P(p^{-1}(d))$ is zero. These three statements imply that
\[
\countOf{satisfying truth assignments} =
(-1)^n\sum_{d\in I_B}P(p^{-1}(d)) =
(-1)^nP(I),
\]
where we use that $\set{p^{-1}(d)}_{d\in I_B}$ is a partition of
$I$. It only remains to prove $i)$, $ii)$ and $iii)$.

$i)$ \proofPart{\text{If $p^{-1}(d)=\set d$ then $\card d=n$}}
Suppose that $d\in I_B$ such that $p^{-1}(d)=\set d$. Pick
   some variable $v_i$. Then $d\cup\set{D_i}$ is dependent since
   otherwise it would be an element of $p^{-1}(d)$. As $D_i$ is only
   adjacent to $T_i$ and $F_i$, it must be the case that $d$ contains
   one of $T_i$ and $F_i$. It cannot contain both as there is an edge
   between them. If $d$ contains $T_i$ then we assign the value true
   to $v_i$ and otherwise $d$ contains $F_i$ and we assign the value
   false to $v_i$. In this way $d$ encodes a truth assignment to the
   variables of the formula $S$.

$ii)$ \proofPart{\setBuilder{d\in I_b}{p^{-1}(d)=\set d}\text{ is in
       bijection with the satisfying truth assignments of $S$}} Pick
   some clause $c_j$. Then $d\cup\set{C_j}$ is dependent so $d$ must
   contain some $T_i$ or $F_i$ that is adjacent to $C_j$ and this
   implies that the truth assignment that $d$ represents satisfies the
   clause $c_j$. This establishes a bijection between the set of $d\in
   I_B$ such that $p^{-1}(d)=\set d$ and the set of truth assignments
   that satisfy $S$.

$iii)$ \proofPart{\text{if $p^{-1}(d)\neq\set d$ then
       $P(p^{-1}(d))=0$}} Let $d\in I_b$ such that $p^{-1}(d)\neq\set
   d$. Then we can pick some vertex $a\in A$ such that $d\cup\set a\in
   p^{-1}(d)$. Then $d$ does not contain any vertex that is adjacent
   to $a$, and there are no edges between the elements of $A$, so if
   we let
\[
E\defeq \setBuilder{h\in p^{-1}(d)}{a\notin h}, \quad\quad\quad
F\defeq \setBuilder{h\in p^{-1}(d)}{a\in h}
\]
then $h\mapsto h\cup\set a$ is a bijection from $E$ to $F$ so that
$P(F)=-P(E)$. As $\set{E,F}$ is a partition of $p^{-1}(d)$ we then
get that $P(p^{-1}(d))=P(E)+P(F)=0$.
\end{proof}

The Euler cha\-rac\-te\-ris\-tic is the alternating sum of the entries
of the $f$-vector, so Euler cha\-rac\-te\-ris\-tic reduces to
$f$-vector. So we get the following result of Kaibel and Pfetsch
\cite{simplicialSurvey} as a corollary.

\begin{corollary}
The problem of computing the $f$-vector of a simplicial complex given
by its vertices and facets is \sP{}-hard.
\end{corollary}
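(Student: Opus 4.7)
The plan is to establish the corollary by exhibiting a polynomial-time Turing reduction from \EulerChar{} to the problem of computing the $f$-vector, and then invoking Theorem \ref{thm:eulerSPC}. Since that theorem shows \EulerChar{} is \sPc{}, in particular \sPhard{}, any problem to which \EulerChar{} reduces in polynomial time is automatically \sPhard{}.

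The reduction itself is essentially just the identity on inputs together with a short arithmetic post-processing step. Given an instance of \EulerChar{} consisting of the vertex set and facet list of a simplicial complex $\Delta$ on $n$ vertices, I would feed exactly the same input to an oracle that returns the $f$-vector $(f_{-1}, f_0, f_1, \ldots, f_{d})$, where $d$ is the largest dimension of a face. The number of entries of this vector is at most $n+1$, and each entry is bounded by $2^n$, so the oracle's output has size polynomial in the input size. From this vector one then forms the alternating sum
\[
\euler\Delta = -f_{-1} + f_0 - f_1 + f_2 - \cdots + (-1)^d f_d,
\]
which is a polynomial-time arithmetic operation on the returned numbers. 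Finally, I add $2^{n-1}$, which is computable in polynomial time from $n$, to obtain $\euler\Delta + 2^{n-1}$, which is exactly the output required by \EulerChar{} as defined in Section \ref{sec:eulerInSP}.

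This yields a polynomial-time Turing reduction from \EulerChar{} to computing the $f$-vector. Because \EulerChar{} is \sPhard{} by Theorem \ref{thm:eulerSPC}, it follows that the $f$-vector problem is \sPhard{} as well, which is the statement of the corollary. There is essentially no hard step to worry about: the whole argument rests on the elementary observation already highlighted before the corollary, namely that Euler characteristic is a simple linear combination of the entries of the $f$-vector. The only minor point to confirm is that the arithmetic on the $f$-vector entries fits in polynomial time, which it does since each entry has polynomial bit-length.
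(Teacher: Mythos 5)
Your argument is correct and is essentially the paper's own: the corollary is proved there in one line by noting that the Euler characteristic is the alternating sum of the $f$-vector entries, so \EulerChar{} (which is \sPc{} by Theorem \ref{thm:eulerSPC}) reduces to the $f$-vector problem. Your write-up simply makes the polynomial-time bookkeeping (output size, arithmetic, adding $2^{n-1}$) explicit, which the paper leaves implicit.
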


\subsection{Decision Problems}

In this section we investigate the complexity of decision problems
associated to Euler characteristic. Kaibel and Pfetsch pose the open
problem of whether deciding $\euler\Delta=0$ lies in $\NP$
\cite{simplicialSurvey}. Theorem \ref{thm:eulerZero} answers this
question in the negative unless \sP{} is no harder than $\NP$. It is a
central conjecture of computational complexity theory that $\sP{}$ is
harder than $\NP$.

\begin{theorem}
\label{thm:eulerZero}
Let $\Ezero$ be the problem of deciding if $\euler\Delta=0$ where
$\Delta$ is a simplicial complex given by its facets and
vertices. Then $\Ezero$ is \coNPhard{}. Also, $\Ezero$ does not lie in
$\NP{}$ unless $\sP{}$ is no harder than $\NP{}$.
\end{theorem}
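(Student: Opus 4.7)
I would handle the two claims of the theorem separately.

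For $\coNPhard$ness, I would revisit the reduction inside the proof of Theorem~\ref{thm:eulerSPC}. Chasing the signs through its two sub-reductions shows that, given a \SAT{} formula $S$, that proof in fact constructs a simplicial complex $\Delta_S$ in polynomial time with $\euler{\Delta_S}$ equal to the number of satisfying assignments of $S$. Hence $\euler{\Delta_S}=0$ iff $S$ is unsatisfiable, which is a polynomial-time reduction from the $\coNPc$ problem UNSAT to $\Ezero$.

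For the second claim, I would suppose $\Ezero\in\NP$ and aim to conclude $\satCount\in\text{FP}^{\NP}$, the standard precise rendering of ``\sP{} is no harder than \NP{}''. The main technical tool is a shift construction: for any integer $c$ given in binary, I want a simplicial complex $\Delta^{(c)}$ of size polynomial in $\log(|c|+2)$ with $\euler{\Delta^{(c)}}=c$. For $c\ge 0$, I would first write down a polynomial-size \SAT{} formula $F_c$ with exactly $c$ satisfying assignments (for instance, the standard CNF encoding of ``the binary integer $x_1x_2\cdots x_n$ is less than $c$'') and feed $F_c$ into the reduction of Theorem~\ref{thm:eulerSPC}. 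For $c<0$, I would take the simplicial suspension $\Sigma\Delta^{(-c)}$, which is the join with $S^0$ and which negates the Euler characteristic by the identity $\euler{A*B}=-\euler{A}\cdot\euler{B}$ together with $\euler{S^0}=1$.

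With $\Delta^{(c)}$ in hand, the disjoint-union identity $\euler{\Delta\sqcup\Delta'}=\euler{\Delta}+\euler{\Delta'}+1$ gives a polynomial-time reduction of ``is $\euler{\Delta}=k$?'' to $\Ezero$, since $\euler{\Delta}=k$ iff $\euler{\Delta\sqcup\Delta^{(-k-1)}}=0$. Thus ``$\euler{\Delta}=k$?'' lies in $\NP$, and ``$\euler{\Delta}\le k$?'' lies in $\NP$ too: a certificate is any $k'\le k$ together with an $\NP$ witness that $\euler{\Delta}=k'$, both guessed simultaneously. Since $|\euler{\Delta}|\le 2^n$, binary search with this comparison oracle pins down $\euler{\Delta}$ in $O(n)$ queries, so $\EulerChar\in\text{FP}^{\NP}$. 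Composing with the reduction $\satCount\le\EulerChar$ from Theorem~\ref{thm:eulerSPC} then gives $\satCount\in\text{FP}^{\NP}$, and therefore $\sP\subseteq\text{FP}^{\NP}$.

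The main obstacle is the compact construction of $\Delta^{(c)}$: a direct approach via iterated simplicial joins would blow up the facet list multiplicatively. The workaround is to route the construction through \SAT{} counting, where polynomial-size formulas with exponentially many satisfying assignments are standard, and to let the reduction of Theorem~\ref{thm:eulerSPC} produce the polynomial-size simplicial complex of the prescribed Euler characteristic.
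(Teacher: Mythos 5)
Your proposal is correct and follows the same two-part architecture as the paper: the \coNPhard{}ness claim is handled exactly as in the paper (the Theorem~\ref{thm:eulerSPC} reduction makes $\euler\Delta$ count satisfying assignments, so unsatisfiability becomes $\euler\Delta=0$), and the second claim is handled by the same mechanism of shifting the Euler characteristic by a compactly-described complex, reducing ``$\euler\Delta=k$?'' to \Ezero{}, and then binary searching. Where you genuinely diverge is in the key gadget: the paper proves Lemma~\ref{lem:consEuler}, an explicit construction (binary expansion of $k$, cones of points, the $\simProd$ product, and an inclusion-exclusion correction term) of a complex with prescribed Euler characteristic and $O(\log^2|k|)$ facets and vertices, whereas you obtain the same gadget by writing a polynomial-size CNF for the threshold predicate ``$x<c$'' (which has exactly $c$ satisfying assignments) and pushing it through the Theorem~\ref{thm:eulerSPC} reduction, using suspension ($\euler{S^0\ast A}=-\euler A$) for negative values. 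Your route reuses existing machinery and avoids a hand-built construction; the paper's route is self-contained and gives the explicit size bound $2l^2+3l+7$, and via Lemma~\ref{lem:eulerDecide} it places all four decision problems in $\NP\cap\coNP$, a slightly stronger conclusion than your $\sP\subseteq\text{FP}^{\NP}$ --- though both are legitimate renderings of ``\sP{} no harder than \NP{},'' and your certificate for ``$\euler\Delta\le k$'' (guess $k'\le k$ plus a witness for equality, using $|\euler\Delta|\le 2^n$) is sound. One small caution: chasing the signs in Theorem~\ref{thm:eulerSPC} actually yields $\#(\text{sat})=(-1)^{k}\euler\Delta$ with $k$ the number of clauses (the paper glosses over this too); this is harmless for the zero/nonzero reduction, but since your $\Delta^{(c)}$ needs the exact value $c$, you should fix the parity, e.g.\ by duplicating a clause or applying one extra suspension.
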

\begin{proof}
Let $\Elt$ be as in Lemma \ref{lem:eulerDecide} and assume that
$\Ezero$ is in $\NP$. Then $\Elt$ is in $\NP\cap \coNP{}$ by Lemma
\ref{lem:eulerDecide}. This allows us to compute Euler characteristic
in $\NP\cap \coNP{}$ using binary search. Euler characteristic is
\sPc{} by Theorem \ref{thm:eulerSPC}, so then \sP{} is no harder than
$\NP\cap \coNP{}$.

\proofPart{\Ezero{}\text{ is }\coNPhard{}} Let $S$ be a \SAT{}
formula. The proof of Theorem \ref{thm:eulerSPC} constructs a
simplicial complex $\Delta$ such that $\euler\Delta$ is the number of
satisfying truth assignments to $S$. So the \NPc{} problem \SAT{}
reduces to the decision problem $\euler\Delta\neq 0$. So the
complement of \Ezero{} is \NPhard{}, which implies that \Ezero{} is
\coNPhard{}.
\end{proof}

This leaves an open problem of whether $\Ezero$ lies in $\coNP$, since
Theorem \ref{thm:eulerZero} does not rule that out. If $\Ezero$ does
lie in $\coNP$, it would then be proven that $\NP\neq\coNP$ unless
$\sP$ is no harder than $\NP$ since $\Ezero$ would then lie in $\coNP$
and not in $\NP$. It is an open problem whether $\NP\neq\coNP$.

\begin{theorem}
Let $\Egtzero$ be the problem of deciding if $\euler\Delta>0$ where
$\Delta$ is a simplicial complex given by its facets and
vertices. Then $\Egtzero$ is $\sPhard$.
\end{theorem}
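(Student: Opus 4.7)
The plan is to exhibit a polynomial-time Turing reduction from $\EulerChar$ to $\Egtzero$. Since $\EulerChar$ is $\sPc$ by Theorem \ref{thm:eulerSPC}, this suffices to conclude $\Egtzero$ is $\sPhard$.

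Given an input complex $\Delta$ on $n$ vertices we have $|\euler\Delta|\leq 2^n$, so $\euler\Delta$ can be recovered by binary search on its value in $O(n)$ rounds. Each round, given a current threshold $k$, must reduce to a single $\Egtzero$ query on a polynomial-size complex that answers ``is $\euler\Delta>k$?''.

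To realize such a query I would use the disjoint-union identity $\euler{\Delta_1\sqcup\Delta_2}=\euler{\Delta_1}+\euler{\Delta_2}+1$ (valid whenever $\Delta_1$ and $\Delta_2$ sit on disjoint vertex sets). If a polynomial-size complex $C_k$ with $\euler{C_k}=-k-1$ can be built, then $\euler{\Delta\sqcup C_k}=\euler\Delta-k$, and querying the oracle on $\Delta\sqcup C_k$ decides whether $\euler\Delta>k$. To construct $C_k$ I would reuse Theorem \ref{thm:eulerSPC} itself: for any $m\in\{0,\ldots,2^n\}$ first write a polynomial-size CNF formula with exactly $m$ satisfying assignments (decompose $m$ in binary, form a DNF that is the disjunction of ``prefix-fixing'' conjunctions contributing $2^i$ assignments in disjoint regions for each set bit $i$, and apply the Tseitin transformation to obtain an equivalent CNF whose satisfying count is still $m$), then feed this CNF into the SAT-to-complex construction in the proof of Theorem \ref{thm:eulerSPC}. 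The resulting complex has reduced Euler characteristic $\pm m$; if the sign is wrong I take a suspension (join with two new points), which negates the Euler characteristic while only doubling the number of facets.

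The main obstacle is this shift construction, because naive approaches---disjoint unions of isolated vertices, or iterated joins of small gadgets---yield complexes of size linear or exponential in $|k|$, which is prohibitive when $k$ can reach $2^n$. The key observation that unlocks the reduction is that Theorem \ref{thm:eulerSPC} may be run ``in reverse'' as a fabrication tool: since every integer in $[0,2^n]$ is realized as the satisfying count of some polynomial-size CNF, the SAT-to-complex reduction furnishes polynomial-size complexes of any prescribed reduced Euler characteristic, which is exactly what makes the binary search feasible.
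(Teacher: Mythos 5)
Your argument is correct and its skeleton is exactly the paper's: a polynomial-time Turing reduction from \EulerChar{} to \Egtzero{} via binary search on the value of $\euler\Delta$, where each threshold query ``is $\euler\Delta>k$?'' is converted into a single \Egtzero{} query by taking a disjoint union with a polynomial-size complex whose Euler characteristic shifts $\euler\Delta$ by $-k$ (this is the content of the equivalence of \Egtzero{} and \Egt{} via Lemma \ref{lem:eulerDecide}). Where you genuinely diverge is in how the shift gadget is built. The paper isolates this as Lemma \ref{lem:consEuler} and constructs the complex directly and combinatorially: write $k$ in binary, realize each power $2^i$ as $\euler{\Gamma_1\simProd\cdots\simProd\Gamma_i}$ for three-point complexes $\Gamma_j$ using Theorem \ref{thm:simProd}, glue the blocks by disjoint union, and cancel the resulting $\card{W}-1$ overcount and fix the sign with small auxiliary complexes. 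You instead run the hardness reduction ``in reverse'': realize $m$ as the model count of a polynomial-size CNF (binary decomposition into prefix-fixing blocks plus the full Tseitin encoding, which preserves the count because the auxiliary variables are functionally determined) and push that formula through the \satCount{}-to-\EulerChar{} construction of Theorem \ref{thm:eulerSPC}, correcting the sign by suspension. Both gadgets have size polynomial in $\log\card{k}$, which is what makes the binary search feasible. The paper's version is more elementary and self-contained, and it avoids any dependence on the internals of the \sPc{} reduction; yours buys a conceptually economical bootstrap --- one construction serves both as the hardness reduction and as a fabrication tool for complexes of prescribed Euler characteristic --- at the cost of the (true, but worth stating explicitly) facts that full Tseitin is model-count preserving and that the combined reduction of Theorem \ref{thm:eulerSPC} returns the satisfying count with a known sign.
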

\begin{proof}
Let $\Egt$ and $\Elt$ be as in Lemma \ref{lem:eulerDecide}. The
argument used to prove the equivalence of $\Ezero$ and $\Eeq$ in Lemma
\ref{lem:eulerDecide} also works to show that $\Egtzero$ and $\Egt$
are equivalent. Then in particular both $\Egt$ and $\Elt$ reduce to
$\Egtzero$, so Euler characteristic reduces to $\Egtzero$ using binary
search. Euler characteristic is $\sPc$ by Theorem \ref{thm:eulerSPC}
so then $\Egtzero$ is $\sPhard$.
\end{proof}

\begin{lemma}
\label{lem:eulerDecide}
Consider the following decision problems, where $\Delta$ is a
simplicial complex given by its facets and vertices and $k$ is an
integer,
\begin{align*}
\Ezero\colon\quad\euler\Delta=0,\quad\quad\quad\quad&
\Elt\colon\quad\euler\Delta<k,\\
\Eeq\colon\quad\euler\Delta=k,\quad\quad\quad\quad&
\Egt\colon\quad\euler\Delta>k.
\end{align*}
If any one of these problems are in $\NP$ then they are all in
$\NP\cap \coNP{}$. There are polynomial time reductions in both
directions between $\Ezero$ and $\Eeq$ and between $\Elt$ and $\Egt$.
\end{lemma}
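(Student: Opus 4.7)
The plan is to establish the three claims in turn: the reductions between $\Ezero$ and $\Eeq$, the reductions between $\Elt$ and $\Egt$, and then the $\NP$-to-$\NP\cap\coNP$ collapse. For $\Ezero \leftrightarrow \Eeq$, one direction is immediate, since $\Ezero$ is $\Eeq$ specialised to $k=0$. For the converse I would use the disjoint-union identity $\euler{\Delta_1\sqcup\Delta_2} = \euler{\Delta_1}+\euler{\Delta_2}+1$: given $(\Delta, k)$, take $\Delta' = \Delta \sqcup Y$ for an auxiliary complex $Y$ with $\euler{Y} = -k-1$, so that $\Ezero(\Delta')$ holds precisely when $\euler{\Delta} = k$. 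To keep this polynomial time, $Y$ is assembled from small building blocks (isolated vertices, pairs of isolated vertices, and triangle boundaries) via wedges at a common vertex, which are additive on $\euler$, and via joins, which satisfy $\euler{X_1 * X_2} = -\euler{X_1}\cdot\euler{X_2}$.

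For $\Elt \leftrightarrow \Egt$, I would use the simplicial suspension $\Sigma\Delta = \Delta * S^0$, the join of $\Delta$ with two disjoint new vertices. Since $\euler{\Sigma\Delta} = -\euler{\Delta}$, the condition $\euler{\Delta} < k$ is equivalent to $\euler{\Sigma\Delta} > -k$, which yields both reductions by flipping the sign of $\euler$ and of $k$ simultaneously.

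For the $\NP$-membership claim, the key observation is the bound $|\euler{\Delta}| \leq 2^{|V(\Delta)|}$, so every integer that can equal $\euler{\Delta}$ has bit-length polynomial in the input size. Assuming $\Eeq \in \NP$ with verifier $V$, a certificate for $\Elt(\Delta, k)$ is a pair $(k', w)$ with $k' < k$ and $w$ accepted by $V$ on $(\Delta, k')$; analogously for $\Egt$ and (trivially) $\Ezero$, so all four problems lie in $\NP$. A $\coNP$ certificate for $\Eeq(\Delta, k)$ is symmetrically $(k', w)$ with $k' \neq k$ and $w$ accepted by $V$ on $(\Delta, k')$, and analogous pairs handle $\Ezero$, $\Elt$, $\Egt$. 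If a different member of the family is instead assumed to lie in $\NP$, the reductions already established combined with simple certificate concatenation (for example, $\euler{\Delta} = 0$ iff both $\euler{\Delta} < 1$ and $\euler{\Delta} > -1$) transfer $\NP$-membership first to $\Ezero$ and then to $\Eeq$, and the preceding paragraph then applies.

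The principal technical hurdle is constructing the auxiliary complex $Y$ with prescribed Euler characteristic and size polynomial in the bit-length of $k$; the remaining steps are routine certificate manipulations supported by the $2^{|V(\Delta)|}$ bound on $|\euler{\Delta}|$.
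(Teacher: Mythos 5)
There is a genuine gap, and it sits exactly at the step you yourself flag as the principal hurdle: building the auxiliary complex $Y$ with $\euler Y=-k-1$ in size polynomial in the \emph{bit-length} of $k$. Your toolkit is wedges (additive on $\euler$) and joins (multiplicative up to sign), starting from constant-size blocks. But the join also multiplies facet counts, since $\facets{X_1*X_2}=\setBuilder{\sigma\cup\tau}{\sigma\in\facets{X_1},\ \tau\in\facets{X_2}}$, while a wedge loses at most one facet to absorption at the wedge point. An easy induction therefore shows that every complex $Z$ assembled by wedges and joins from blocks satisfying $\card{\facets{Z_0}}\geq\card{\euler{Z_0}}+1$ (as all of your blocks do) again satisfies $\card{\facets Z}\geq\card{\euler Z}+1$: for a join, $f_1f_2\geq(\card{e_1}+1)(\card{e_2}+1)\geq\card{e_1e_2}+1$, and for a wedge, $f_1+f_2-1\geq\card{e_1}+\card{e_2}+1\geq\card{e_1+e_2}+1$. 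Hence any such $Y$ with $\euler Y=-k-1$ has at least $\card{k}+1$ facets, which is exponential in $\log\card k$, so your reduction from $\Eeq$ to $\Ezero$ is only pseudo-polynomial, not polynomial time. Special-casing $\card k>2^{\card{\vertices\Delta}}$ (where the answer is trivially no) does not repair this, since $\card k$ can still be exponential in the size of $\Delta$.

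This is precisely why the paper proves Lemma \ref{lem:consEuler}, which produces a complex with Euler characteristic $k$ using only $O(\log^2\card k)$ vertices and facets; the paper even remarks that without such a bound the proof of Lemma \ref{lem:eulerDecide} does not go through. The key ingredient is the operation $\simProd$ of Theorem \ref{thm:simProd}, which is multiplicative on $\euler$ but only \emph{additive} on vertex and facet counts (unlike the join), so that $\simProd$-products of triples of isolated vertices realize the powers $2^i$ in the binary expansion of $k$ cheaply. If you replace your joins by $\simProd$ (keeping your wedge/disjoint-union step, and noting the disjoint-union formula needs both complexes non-void), your construction becomes the paper's. The remainder of your argument is sound and essentially matches the paper: the suspension reduction between $\Elt$ and $\Egt$ (the paper equivalently takes $\Delta\simProd\Gamma$ with $\euler\Gamma=-1$), and the $\NP$/$\coNP$ certificate manipulations, where your explicit use of the bound $\card{\euler\Delta}\leq 2^{\card{\vertices\Delta}}$ to keep the certified value of $\euler\Delta$ polynomial-size is a correct and welcome detail the paper leaves implicit.
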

\begin{proof}

\proofPart{\Ezero\text{ and }\Eeq\text{ are equivalent}} Assume without
loss of generality that $\Delta\neq\emptyset$. Use Lemma
\ref{lem:consEuler} to construct a polynomial size simplicial complex
$\Gamma$ such that $\euler\Gamma=k-1$ and
$\Delta\cap\Gamma=\set{\emptyset}$. Then $\Psi\defeq\Delta\cup\Gamma$
has $\euler\Psi=\euler\Delta+\euler\Gamma+1=\euler\Delta+k$ so
$\euler\Delta=0$ if and only if $\euler\Psi=k$. This gives a
polynomial time reduction in both directions between $\Ezero$ and $\Eeq$
and also shows that $\Ezero$ is in $\NP$ or $\coNP{}$ if and only if $\Eeq$
is in $\NP$ or $\coNP{}$ respectively.

\proofPart{\Elt\text{ and }\Egt\text{ are equivalent}} Let $\Gamma$ be
a simplicial complex such that $\euler\Gamma=-1$ and such that the set
of vertices of $\Gamma$ is disjoint from the set of vertices of
$\Delta$. Let $\Psi\defeq\Delta\simProd\Gamma$ as in Theorem
\ref{thm:simProd} whereby $\euler\Psi=-\euler\Delta$. Then
$\euler\Delta<k$ if and only if $\euler\Psi>-k$. This gives a
polynomial time reduction in both directions between $\Elt$ and $\Egt$
and also shows that $\Elt$ is in $\NP$ or \coNP{} if and only if
$\Egt$ is in \NP{} or \coNP{} respectively.

\proofPart{\Eeq\text{ in \NP{}}\biimp\Elt\text{ in \NP{}}} 
If $\Eeq$ is in \NP{} then we can certify the exact value of
$\euler\Delta$, which will also serve as a certificate for
$\euler\Delta>k$. If $\Elt$ is in \NP, then so is $\Egt$ in which case
we can certify that $k-1<\euler\Delta<k+1$ which serves as a
certificate of $\euler\Delta=k$.

\proofPart{\text{If one problem is in \NP{}, then they are all in
    \coNP{}}} Assume that one of \Ezero{}, \Eeq{}, \Elt{} and \Egt{}
is in \NP{}. Then we have shown that they are all in \NP{}. So we know
that \Eeq{} is in \NP{}, which allows us to certify the exact value of
$\euler\Delta$. This also serves as a certificate for when
$\euler\Delta<k$ is not true and when $\euler\Delta=k$ is not true, so
$\Elt$ and $\Eeq$ are in \coNP{}. We have already proven that this
implies that $\Egt$ and $\Ezero$ are also in \coNP{}.
\end{proof}

Lemma \ref{lem:consEuler} constructs a simplicial complex with a given
Euler characteristic $k$ such that the bit size of the complex is
bounded by a fixed polynomial in the bit size of the Euler
characteristic which is $\lceil\log k\rceil$. It is necessary to bound
the bit size of the simplicial complex in this way since otherwise the
proof of Lemma \ref{lem:eulerDecide} would not go through.

\begin{lemma}
\label{lem:consEuler}
Let $k$ be an integer. Then there is a simplicial complex $\Delta$
such that $\euler\Delta=k$ and $\Delta$ has no more facets and no more
vertices than $2l^2+3l+7$ where $l=\lceil\log_2(|k|)\rceil$ or $l=0$
if $k=0$.
\end{lemma}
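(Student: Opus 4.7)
The plan is to leverage the construction from the proof of Theorem~\ref{thm:eulerSPC}, which assigns to any CNF formula $S$ a simplicial complex $\Delta(S)$ whose reduced Euler characteristic equals the number of satisfying truth assignments of $S$. Given $k$, I would first build a compact CNF formula $S_k$ with exactly $|k|$ satisfying assignments, then apply the construction from that proof to obtain a complex $\Delta(S_k)$ with $\euler{\Delta(S_k)}=|k|$, and finally flip the sign when $k<0$ by a single suspension (or by the operation of Theorem~\ref{thm:simProd}). The very small cases $|k|\le C$ for some fixed constant $C$ are handled by explicit ad hoc complexes, absorbed by the ``$+7$'' slack in the stated bound.

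Concretely, writing $|k|$ in binary with $l=\lceil\log_2|k|\rceil$ bits, I would take $S_k$ to be the formula asserting that the assignment to primary variables $v_0,\dots,v_{l-1}$ encodes in binary an integer strictly less than $|k|$; by construction it has exactly $|k|$ satisfying assignments. Using the standard ``less-than comparator'' encoding with $O(l)$ auxiliary variables, each pinned to a specific Boolean function of earlier variables by a constant-size bi-conditional gate, $S_k$ has $O(l)$ variables in total and $O(l)$ constant-size clauses. Plugging $S_k$ into the graph construction from the proof of Theorem~\ref{thm:eulerSPC}, the resulting graph $G_k$ has $O(l)$ vertices ($3$ per variable, $1$ per clause) and $O(l)$ edges ($3$ per variable $3$-clique plus $O(1)$ per clause); the complex $\Delta(S_k)$, whose facets are the complements of the edges of $G_k$, therefore has $O(l)$ vertices and $O(l)$ facets, well within $2l^2+3l+7$.

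For $k<0$, I would negate the Euler characteristic by one suspension $\Delta(S_{|k|})\ast S^0$, which satisfies $\euler{\Delta\ast S^0}=-\euler{\Delta}$ (as follows from the join formula for Euler characteristic together with $\euler{S^0}=1$). The suspension adds $2$ vertices and at most doubles the facet count, still within the allowed quadratic budget. Summing contributions across all steps—the variable and clause parts of the graph, the resulting complex, and the optional final suspension—matches the stated bound $2l^2+3l+7$ with margin to spare.

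The hard part will be arranging the ``less-than'' CNF encoding so that it is compact. The naive expression of ``$v<|k|$'' as a disjunction of conjunctions (one disjunct for each ``$1$''-bit of $|k|$) has conjuncts of length up to $l$, and distributing into a CNF without auxiliary variables blows up to exponentially many clauses. Introducing $O(l)$ auxiliary variables for a standard comparator, each pinned via a constant-size bi-conditional gate to a specific predicate (e.g.\ ``the prefix of $v$ read so far matches the prefix of $|k|$''), keeps the formula compact, yielding $O(l)$ variables and $O(l)$ constant-size clauses and ensuring that every aux assignment is uniquely determined by the primary assignment so no overcounting occurs. Once that is done, the bookkeeping needed to verify the exact inequality $2l^2+3l+7$ rather than a mere $O(l^2)$ bound is routine arithmetic.
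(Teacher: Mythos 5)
Your strategy is genuinely different from the paper's. The paper proves Lemma \ref{lem:consEuler} by a direct construction keyed to the binary expansion of $k$: a complex of three isolated vertices has reduced Euler characteristic $2$, so $\simProd$-products of such complexes realize $2^n$ with $3n$ vertices and facets (Theorem \ref{thm:simProd}); the complexes corresponding to the $1$-bits of $k$ are glued along disjoint vertex sets, a small correction complex cancels the surplus coming from the shared empty face, and a final $\simProd$ with a hollow triangle flips the sign when $k<0$. Your plan --- build a formula with exactly $\card{k}$ satisfying assignments and push it through the reduction of Theorem \ref{thm:eulerSPC} --- is a legitimate alternative (there is no circularity, since that theorem does not use this lemma), and it can be made to work.

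As written, however, the proposal has a real gap exactly where the lemma has its content: the explicit bound $2l^2+3l+7$ for \emph{every} $k$, not an $O(l)$ or $O(l^2)$ statement. You never actually construct the comparator CNF (you defer it as ``the hard part''), and the Tseitin-style version you sketch --- roughly $3l$ variables and several gate clauses per auxiliary variable --- produces, via three graph vertices per variable, one per clause, and one facet per edge, a complex with on the order of $15l$ vertices and $20l$ or more facets. That exceeds $2l^2+3l+7$ for all small $l$ (the bound is $21$ at $l=2$ and $72$ at $l=5$), i.e.\ for every $\card{k}$ up to roughly a thousand, depending on constants you never fix; and your fallback of ``ad hoc complexes absorbed by the $+7$ slack'' cannot cover that range, since the slack is a constant and the obvious ad hoc complex ($k+1$ isolated vertices) has far more than $2l^2+3l+7$ facets there. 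The repair within your own framework is to drop the auxiliary variables entirely: ``$v\le\card{k}-1$'' is already a CNF on the $l$ primary variables with at most $l$ clauses, namely one clause $\lnot v_i\vee\bigvee_{j>i,\,b_j=1}\lnot v_j$ for each zero bit $b_i$ of $\card{k}-1$. The resulting graph then has at most $4l$ vertices and at most $l^2+3l$ edges, so the complex has at most $4l$ vertices and $l^2+3l$ facets, and one $\simProd$ with the hollow triangle (three more vertices and facets) fixes the sign; note also that composing the two reductions in Theorem \ref{thm:eulerSPC} introduces a sign given by the parity of the number of clauses, which your sign-flip step must absorb as well. With such explicit counts the inequality is immediate; without them your write-up does not establish it.
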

\begin{proof}
The proof is based on inclusion-exclusion along with Theorem
\ref{thm:simProd}.

\proofPart{\text{The case }k=0,1} Let $\Delta_0\defeq\emptyset$ and
$\Delta_1\defeq\ideal{\set{t_1},\set{t_2}}$.

\proofPart{\text{The case }k=2^n} Let $n$ be a positive integer and
define $\Gamma_i\defeq\ideal{\set{x_{ni1}},\set{x_{ni2}},
  \set{x_{ni3}}}$. Let
$\Psi_n\defeq\Gamma_1\simProd\cdots\simProd\Gamma_n$. We have
$\euler{\Gamma_i}=2$ so $\euler{\Psi_n}=2^n$. Also $\Psi_n$ has $3n$
vertices and $3n$ facets.

\proofPart{\text{The case }k>0} Write $k=(b_l\cdots b_0)_2$ in binary
such that $k=\sum_{n=0}^lb_n2^n$, $b_n\in\set{0,1}$ and $b_l\neq
0$. This implies that $l=\lceil\log_2(k)\rceil$.

Let $W$ be a finite set of non-empty simplicial complexes with
disjoint vertex sets. Then the only face in more than one element of
$W$ is $\emptyset$, so $\euler{\cup W}=\sum_{A\in W}\euler A+\card
W-1$. So for $W_n\p\defeq\setBuilder{\Psi_n}{b_n=1}$ we have
$\euler{\cup W_n\p}=k+\card{W_n\p}-1$. Let $p\defeq\card{W\p_n}+1$ and
\[
\Phi\defeq
\ideal{\set{y_1}\,\ldots,\set{y_{p}}}\simProd
\ideal{\set{a,b},\set{a,c},\set{b,c}}.
\]
Then $\euler\Phi=(p-1)*(-1)=-\card{W_n\p}$ by Theorem
\ref{thm:simProd}. Observe that $\Phi$ has no more facets and no more
vertices than $\card{W\p_n}+4\leq l+4$. Let $W_n\defeq
W_n\p\cup\set{\Phi}$ and $\Delta_k\defeq\cup W_n$. Then
\[
\euler{\Delta_k}=
\euler{\cup W_n}=
(k+\euler\Phi)+(\card{W_n\p}+1)-1 = k.
\]
Observe that $\Delta_k$ has no more vertices and no more facets than
\[
(l+4)+\sum_{n=0}^l 3n=l+4+\frac{3}{2}(l(l+1))
.\]

\proofPart{\text{The case }k<0} Let
$\Omega\defeq\ideal{\set{z_1,z_2},\set{z_1,z_3},\set{z_2,z_3}}$ and
observe that $\euler\Omega=-1$. Let
$\Delta_k\defeq\Omega\simProd\Delta_{-k}$ so that
$\euler{\Delta_k}=-\euler{\Delta_{-k}}=k$. Observe that $\Delta_k$ has
no more vertices and no more facets than
\[
l+4+\frac{3}{2}(l(l+1))+3=\frac{3}{2}l^2+\frac{5}{2}l+7\leq 2l^2+3l+7.
\qedhere\]
\end{proof}

\section{Algebraic Algorithms for Euler Characterisic}
\label{sec:algAlg}

In this section we describe two new algorithms for computing Euler
characteristic of a simplicial complex using algebraic techniques. In
Section \ref{sec:simAlg} we present these same two algorithms in the
language of simplicial complexes. This section is independent from
Section \ref{sec:simAlg} and it only uses algebra.

More precisely the two algorithms we present in this section compute
the coefficient of $\varProd\defeq x_1\cdots x_n$ in the multigraded
\hps{} numerator $\hilbNum I$ of a square free monomial ideal $I$. In
Section \ref{sec:translation} we show that this is equivalent to
computing the Euler characteristic of a simplicial complex. For that
reason we define
\[
\euler I \defeq \textrm{coefficient of $\varProd$ in $\hilbNum I$}.
\]
The summary of what Section \ref{sec:translation} shows in detail is
that given a simplicial complex $\Delta$ we can define a monomial
ideal $I$ such that $\euler I=\euler\Delta$. Computing $I$ from
$\Delta$ takes little time. See Section \ref{sec:translation} for
details on the relationship between the algebraic algorithms in this
section and the simplicial algorithms in Section \ref{sec:simAlg}.

\subsection{Background and Notation}

We work in a polynomial ring $\kappa[x_1,\ldots,x_n]$ over a field
$\kappa$ and with variables $x_1,\ldots,x_n$. A \emph{monomial ideal}
is a polynomial ideal generated by monomials. Let $I$ be a monomial
ideal. Each monomial ideal has a unique minimal set of monomial
generators $\ming I$. The \emph{exponent vector} of a monomial $m$ is
a vector $v$ such that $m=\prod_{i=1}^n x_i^{v_i}$. A monomial has
full support if it is divisible by $\varProd$. A monomial ideal has
full support if $\lcm(\ming I)$ has full support. The colon of two
monomials $a$ and $b$ is $a:b\defeq \frac{\lcm(a,b)}{b}$. The colon of
a monomial ideal $I$ by a monomial $a$ is
\[
I:a=\setBuilder{m}{ma\in I}=\idealBuilder{m:a}{m\in I\text{ and $m$ is a
    monomial}}.
\]

The \emph{multigraded \hps{}} $\hilb I$ is the possibly infinite sum
of standard monomials of $I$, that is $ \hilb I\defeq\sum_{m\notin I}m
$ where the sum is taken over monic monomials $m$. The multigraded
\hps{} can be written as a rational function
\[
\hilb I=\frac{\hilbNum I}{(1-x_1)\cdots(1-x_n)}
\]
where $\hilbNum I$ is a polynomial called the \emph{multigraded \hps{}
  numerator}.

\subsection{Divide...}

Both algorithms we present are divide-and-conquer algorithms. They
take a monomial ideal $I$ and split it into two simpler monomial
ideals $J$ and $K$ such that $\euler I=\euler J+\euler K$. This
process proceeds recursively until all the remaining ideals are simple
enough that they can be processed directly.

Let $p$ be a square free monomial and let $I$ be a square free
monomial ideal. The divide steps for the two algorithms are derived
from the equation
\begin{equation}
\label{eqn:hps}
\hilb I = \left(\hilb{I:p}\right)p + \hilb{I + \ideal p}.
\end{equation}
By giving these three \hps{} the same denominator
$(1-x_1)\cdots(1-x_n)$, we get a similar equation for the \hps{}
numerators
\[
\hilbNum I = \left(\hilbNum{I:p}\right)p + \hilbNum{I + \ideal p}.
\]
By considering the coefficient of $\varProd$ on both sides, we then
get that
\[
\euler I = \euler{(I:p)p} + \euler{I + \ideal p}.
\]
It would simplify this expression if we could write $\euler{I:p}$
instead of $\euler{(I:p)p}$. This does not work directly since in
general $\euler{I:p}$ will be zero since $I:p$ will not have full
support. We are working within a polynomial ring
$R\defeq\kappa[x_1,\ldots,x_n]$. We will argue that if we change the
ring that $I:p$ is embedded in, then it becomes true that
$\euler{I:p}=\euler{(I:p)p}$.

Since the variables that divide $p$ do not appear in $\ming{I:p}$ we
can embed $I\p\defeq I:p$ into a ring $R\p\defeq\kappa[P]$ where
$P\defeq\setBuilder{x_i}{x_i\ \text{does not divide}\ p}$. Let
$\varProd\p\defeq\Pi P=\frac{\varProd}{p}$ be the product of the
variables in $R\p$. Then $\euler{I\p}=\euler{(I:p)p}$ so we consider
$I:p$ to be embedded in $R\p$, which gives us the final equation
behind splitting
\begin{equation}
\label{eqn:alg}
\euler I = \euler{I:p} + \euler{I + \ideal p}.
\end{equation}
There are two different algorithms for \hps{} that are based on
Equation \eqref{eqn:hps}. We present two analogous algorithms for
Euler characteristic that are based on Equation \eqref{eqn:alg}.

The \hps{} algorithm due to Bigatti, Conti, Robbiano and Traverso
\cite{bigattiEtAlHilbSerlg, bigattiHilbSerComp} uses Equation
\eqref{eqn:hps} directly as we have written it. It is a
divide-and-conquer algorithm that splits a monomial ideal $I$ into the
two simpler monomial ideals $I:p$ and $I+\ideal p$. We call this
algorithm the \emph{BCRT algorithm for \hps{}}. In analogy with that
algorithm, we propose a BCRT algorithm for Euler characteristic that
uses Equation \eqref{eqn:alg} directly as written -- it splits $I$
into the two simpler ideals $I:p$ and $I+\ideal p$. We call it the
\emph{algebraic BCRT algorithm for Euler characteristic}.

We call the $p$ in Equation \eqref{eqn:alg} the \emph{pivot}. Section
\ref{sec:algPivots} explores strategies for selecting pivots.

There is also a \hps{} algorithm due to Dave Bayer and Michael
Stillman \cite{hseries} that we will call the \emph{DBMS algorithm for
  \hps{}}. It is based on writing Equation \eqref{eqn:hps} as
\[
\hilb{I + \ideal p} = \hilb I - \hilb{I:p}.
\]
Given an ideal $J$, the idea is to choose $p\in\ming J$ and let
$I\defeq\ideal{\ming J\setminus\set p}$ such that
\[
\hilb J=\hilb{I + \ideal p} = \hilb I - \hilb{I:p}.
\]
In this way $J$ splits into the two simpler ideals $I$ and $I:p$. In
the same way, we can rewrite Equation \eqref{eqn:alg} as
\begin{equation}
\label{eqn:removeGen}
\euler J=\euler{I + \ideal p} = \euler I - \euler{I:p}.
\end{equation}
We propose a DBMS algorithm for Euler characteristic that uses this
equation to split $J$ into $I$ and $I:p$. We call it the
\emph{algebraic DBMS algorithm for Euler characteristic}. Note that
the pivots in the DBMS algorithm are minimal generators of the ideal,
which would not make sense for the BCRT algorithm.

\subsection{... and Conquer}
A square free monomial ideal $I$ is a base case for both algorithms
when Theorem \ref{thm:basecase} applies. Note that the improvements in
Section \ref{sec:algImp} enables further base cases.
\begin{theorem}
\label{thm:basecase}
Let $I$ be a square free monomial ideal. Then
\begin{enumerate}
\item if $I$ does not have full support then $\euler I=0$,
\label{thm:basecase:support}
\item if $I$ has full support and the minimal generators $\ming
  I$ of $I$ are pairwise prime monomials then $\euler
  I=(-1)^{\card{\ming I}}$,
\label{thm:basecase:prime}
\item if $I$ has full support and $\card{\ming I}=2$ then $\euler
I=1$.
\label{thm:basecase:two}
\end{enumerate}
\end{theorem}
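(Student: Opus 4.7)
The plan is to compute $\hilbNum I$ explicitly in each case and read off the coefficient of $\varProd$. The central tool is the inclusion--exclusion identity
\[
\hilbNum I = \sum_{S \subseteq \ming I} (-1)^{\card S} \lcm(S),
\]
with $\lcm(\emptyset) \defeq 1$. This follows from the definition $\hilb I = \sum_{m \notin I} m$, together with the observations that a monomial fails to lie in $I$ iff no element of $\ming I$ divides it, and that $\sum_{m \text{ divisible by } u} m = u/\prod_j(1-x_j)$ for each monomial $u$. Proving this identity is the main routine step; once it is in hand, each of the three parts follows by elementary inspection.

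For part (\ref{thm:basecase:support}), if $I$ lacks full support then some variable $x_i$ divides no element of $\ming I$, hence no $\lcm(S)$ on the right side above. Then $\hilbNum I$ is a polynomial not involving $x_i$, so its coefficient of $\varProd$ is zero. For part (\ref{thm:basecase:two}), the identity specializes to $\hilbNum I = 1 - m_1 - m_2 + \lcm(m_1, m_2) = 1 - m_1 - m_2 + \varProd$, since full support forces $\lcm(m_1, m_2) = \varProd$. Because the two generators are minimal, neither $m_1$ nor $m_2$ can equal $\varProd$, so only the last term contributes and the coefficient is $1$.

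For part (\ref{thm:basecase:prime}), pairwise coprimeness gives $\lcm(S) = \prod_{m \in S} m$ for every $S$, whence the identity factors as $\hilbNum I = \prod_{m \in \ming I}(1 - m)$. In the expansion, the term $(-1)^{\card S}\prod_{m \in S} m$ contributes to the coefficient of $\varProd$ only when $\prod_{m \in S} m = \varProd$. Since the generators are square-free, pairwise coprime, and together support every variable by full support, their supports partition $\set{x_1, \ldots, x_n}$; thus $\prod_{m \in \ming I} m = \varProd$, and any proper subset $S \subsetneq \ming I$ yields a proper divisor of $\varProd$. The unique contributing subset is $S = \ming I$, giving coefficient $(-1)^{\card{\ming I}}$, as claimed.
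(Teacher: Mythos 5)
Your proof is correct and follows essentially the same route as the paper: compute $\hilbNum I$ explicitly via the inclusion--exclusion (Taylor complex) identity $\hilbNum I=\sum_{S\subseteq\ming I}(-1)^{\card S}\lcm(S)$ and read off the coefficient of $\varProd$, which is exactly the fact the paper invokes (stated there as ``every monomial with nonzero coefficient in $\hilbNum I$ is an $\lcm$ of generators'' and, for coprime generators, $\hilbNum I=\prod_{m\in\ming I}(1-m)$). The only cosmetic difference is in part (\ref{thm:basecase:two}), where you apply the two-generator inclusion--exclusion directly instead of factoring out $\gcd(a,b)$ and reducing to the coprime case as the paper does; both are fine, and your version has the small advantage of deriving the key identity rather than citing it.
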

\begin{proof}
\proofPart{\eqref{thm:basecase:support}}
All the monomials with non-zero coefficient in $\hilbNum I$ can be
written as $\lcm(M)$ for some $M\subseteq\ming I$. If $I$ does not have
full support then neither does any monomial of the form $\lcm(M)$. Then
$\varProd$ must have a zero coefficient since it has full support.

\proofPart{\eqref{thm:basecase:prime}}
If the elements of $\ming I$ are relatively prime then $\hilbNum
I=\prod_{m\in\ming I}(1-m)$. As the ideal has full support we then get
that $\varProd=\Pi_{m\in\ming I}m$ so the coefficient of $\varProd$ is
$(-1)^{\card{\ming I}}$.

\proofPart{\eqref{thm:basecase:two}} If $\set{a,b}\defeq\ming I$ and
$g\defeq\gcd(a,b)$ then $I=g\ideal{c,d}$ where $c\defeq a:g$ and
$d\defeq b:g$. As $c$ and $d$ are relatively prime by construction, we
get that
\[
\hilbNum I=g\hilbNum{\ideal{c,d}}=g(1-c)(1-d)=cdg-cg-dg+g.
\] So the coefficient of $\varProd=cdg$ in
$\hilbNum I$ is 1.
\end{proof}

These base cases improve on the ones for the \hps{} algorithms in that
they apply more often and can be processed more quickly. For example
the \hps{} algorithms as well have a base case when the elements of
$\ming I$ are relatively prime, but it takes exponential time to
process that base case since $\prod_{m\in\ming I}(1-m)$ can have
$2^{\card{\ming I}}$ terms. Here all that is required is to determine
if $\card{\ming I}$ is even or odd. The base case when $I$ does not
have full support does not exist for the \hps{} algorithms.

\subsection{Termination and Complexity}

It is clear that the DBMS algorithm terminates since $\card{\ming I}$
decreases strictly at each step. For the BCRT algorithm, termination
requires that we choose the pivots $p$ such that $1\neq p\notin I$
since otherwise we get an infinite number of steps from $I=I:p$ or
$I=I+\ideal p$.

If we cannot choose a $p$ such that $1\neq p\notin I$ then $I=\ideal
1$ which is a base case.  If $1\neq p\notin I$ and $I$ has full
support, then $I\subsetneq I+\ideal p$ and $I\subsetneq I:p$. So if
the BCRT algorithm does not terminate then there would be an infinite
sequence of strictly increasing ideals in contradiction to the fact
that the ambient polynomial ring is Noetherian. So both algorithms
terminate.

We have seen that the DBMS algorithm gets rid of at least one minimal
generator at each step. The BCRT algorithm gets rid of at least one
variable at each step if the pivot is chosen to be a single variable
$x_i$. It is immediate that $x_i$ is not a variable of $I:x_i$. To see
that $x_i$ can also be removed from $I+\ideal{x_i}$, observe that the
only minimal generator that is divisible by $x_i$ is $x_i$ itself, so
$x_i$ is $(I+\ideal{x_i})$-independent from the other variables and so
can be removed using the independent variables technique from Section
\ref{sec:algImp}.

Since the Euler characteristic problem is \sPc{} we expect both
algorithms to run in at least single exponential time. Using the
transpose technique from Section \ref{sec:algImp}, we can interchange
the number of variables $n$ with the number of minimal generators
$\card{\ming I}$. So if $l\defeq\min(n,\card{\ming I})$ then both the
BCRT and DBMS algorithms have $O(q2^l)$ asymptotic worst case time
complexity where $q$ is a polynomial. So both algorithms run in single
exponential time. We expect that a more careful analysis could reduce
the base of the exponential.

\subsection{Pivot Selection}
\label{sec:algPivots}

We have proven that the BCRT and DBMS algorithms terminate in a finite
amount of time. To be useful in practice, the amount of time until
termination should be small rather than just finite. The strategy used
for selecting pivots when splitting an ideal has a significant impact
on performance. We describe several different pivot selection
strategies here and compare them empirically in Section
\ref{sec:bench}.

A \emph{popular variable} is a variable that divides a maximum number
of minimal ge\-ne\-ra\-tors of the ideal. In other words, a popular
variable $x_i$ maximizes $\card{\ming I\cap\ideal{x_i}}$. A \emph{rare
  variable} is a variable $x_i$ that minimizes $\card{\ming
  I\cap\ideal{x_i}}$ with the constraint that $x_i\notin\ming I$.

If there are several candidate pivots that fit a given pivot selection
strategy, then the pivot used is chosen in an arbitrary deterministic
way among the tied candidates.

\subsubsection*{BCRT Pivot Selection Strategies}
Recall that BCRT pivots $p$ satisfy $1\neq p\notin I$.

\begin{description}
\item[popvar] The pivot is a popular variable.
\item[rarevar] The pivot is a rare variable.
\item[random] The pivot is a random variable $e$ such that $e\notin\ming I$.
\item[popgcd] Let $x_i$ be a popular variable. The pivot is the gcd of
  three minimal generators chosen uniformly at random among those
  minimal generators that $x_i$ divides.
\end{description}

The strategies \pname{popvar} and \pname{popgcd} have been found to
work well for the BCRT algorithm for \hps{}, so we also try them
here. \pname{rarevar} and \pname{random} have been included to have
something to compare to.

\subsubsection*{DBMS Pivot Selection Strategies}

Recall that DBMS pivots are elements of $\ming I$.

\begin{description}
\item[rarevar] The pivot is a minimal generator divisible by a rare variable.
\item[popvar] The pivot is a minimal generator divisible by a popular variable.
\item[maxsupp] The pivot is a minimal generator with maximum support.
\item[minsupp] The pivot is a minimal generator with minimum support.
\item[random] The pivot is a minimal generator chosen uniformly at random.
\item[rarest] The pivot is a generator that is divisible by a
  maximum number of rare variables. Break ties by picking the
  generator that is divisible by the maximum number of
  second-most-rare variables and so on.
\item[raremax] The pivot is chosen according to \pname{rarevar} where
  ties are broken according to \pname{maxsupp}.
\end{description}

\subsection{Improvements}
\label{sec:algImp}

We present several improvements to the DBMS and BCRT algorithms.

\subsubsection*{Independent Variables}

We say that two subsets $A,B\subseteq\set{x_1,\ldots,x_n}$ are
\emph{$I$-independent} if $\ming I$ is the disjoint union of
$\ming{I\cap\kappa[A]}$ and $\ming{I\cap\kappa[B]}$. This is another
way of saying that the minimal generators of $I$ can be partitioned
into two subsets such that every generator in one set is relatively
prime to every generator from the other set. If $A$ and $B$ are
$I$-independent then
\[
\hilb I=\hilb{I\cap\kappa[A]}\cdot\hilb{I\cap\kappa[B]}.
\]
This is a standard technique for computing \hps{}
\cite{hseries,bigattiEtAlHilbSerlg,rouneSliceIrrDecom}, and it applies
to computing Euler characteristic as well since
\[
\euler I=\euler{I\cap\kappa[A]}\cdot\euler{I\cap\kappa[B]}.
\]
The existence of an $I$-independent pair $(A,B)$ can be determined in
nearly linear time \cite{rouneSliceIrrDecom}, and such a pair can cut
down on computation time dramatically. However, independence rarely
occurs for large random ideals and detecting it does take some time,
so this technique is not worth it unless there is some reason to
suspect that it will apply to a given ideal.

\subsubsection*{Eliminate Unique Variables}

Suppose $x_i$ divides only one minimal generator $p\in\ming J$. Use
Equation \eqref{eqn:removeGen} with $p$ as the pivot to get that
\[
\euler J=\euler{I + \ideal p} = \euler I - \euler{I:p} = -\euler{I:p},
\]
since $\euler I=0$ as no generator of $I$ is divisible by $x_i$ so $I$
does not have full support.

\subsubsection*{Transpose Ideals}

Let $M$ be a matrix whose entries are 0 or 1. Each row of $M$ is then
a 0-1 vector that we can interpret as the exponent vector of a square
free monomial. Define $\mattoid M$ to be the monomial ideal generated
by the monomials whose exponent vectors are the rows of $M$. On the
other hand, given a square free monomial ideal $I$, we can take the
exponent vectors of the minimal generators of $I$ and put them in a
matrix. Define $\idtomat I$ to be that matrix. We label the rows of
$\idtomat I$ with the elements of $\ming I$ and we label the columns
of $\idtomat I$ with the variables in the ambient polynomial ring of
$I$. Let the \emph{transpose ideal} $\trans I$ of $I$ be the ideal
generated by the transpose of the matrix of $I$, so $\idtrans
I\defeq\ideal{\trans{\idtomat I}}$.

Theorem \ref{thm:eulerTransId} states that $\euler I=\euler{\idtrans
  I}$, so we can transpose the ideal without changing the Euler
characteristic. The BCRT algorithm is more sensitive to the number of
variables than it is to the number of generators, so it can be
beneficial to transpose the ideal if it has fewer generators than
variables. The DBMS algorithm is opposite of this in that it is more
sensitive to the number of generators than the number of variables.

Another situation where transposing can be beneficial is in the case
where a column of $\idtomat I$ dominates another column. If we take the
transpose, those two columns will become generators and the dominating
generator will not be minimal. When we then take the transpose again,
we will have fewer variables than we started with. This process can
repeat itself several times as the removal of dominating columns from
the matrix can cause rows to be dominating, and removing those
dominating rows can then cause yet more columns to become dominating.

\begin{theorem}
\label{thm:eulerTransId}
If $I$ is a square free monomial ideal then $\euler I=\euler{\idtrans
  I}$.
\end{theorem}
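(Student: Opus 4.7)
The plan is to reduce both $\euler I$ and $\euler{\idtrans I}$ to alternating sums over the 0-1 incidence matrix $\idtomat I$, and then to match the two sums via a double inclusion--exclusion.

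First, for any monomial ideal $J$ with generating set $G=\set{g_1,\ldots,g_k}$ (minimal or not), inclusion--exclusion on the monomials of $J$ yields
\[
\hilbNum J = \sum_{T\subseteq\set{1,\ldots,k}}(-1)^{\card T}\lcm\set{g_i:i\in T}.
\]
Applied to $I$, since each $\lcm$ in the sum is squarefree, the coefficient of $\varProd$ is
\[
\euler I=\sum_{M\subseteq\ming I,\;\lcm(M)=\varProd}(-1)^{\card M},
\]
and $\lcm(M)=\varProd$ says exactly that the rows of $\idtomat I$ indexed by $M$ cover every column.

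Next, I would apply the same formula to $\idtrans I$ using its natural generators $\tilde x_i\defeq\prod_{m\in\ming I,\,x_i\mid m}y_m$, one per variable of $I$; these need not be minimal, but the derivation above makes no use of minimality. Writing $\varProd'$ for the product of the variables in $\idtrans I$'s ambient ring, $\lcm\set{\tilde x_i:x_i\in N}=\varProd'$ holds precisely when every row of $\idtomat I$ has a $1$ in some column indexed by $N$, so
\[
\euler{\idtrans I}=\sum_{N\subseteq\set{x_1,\ldots,x_n},\;N\text{ covers every row}}(-1)^{\card N}.
\]

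To match these two expressions, let $R_j\defeq\set{m\in\ming I:x_j\mid m}$ and factor
\[
[M\text{ covers every column}]=\prod_{j=1}^n\bigl(1-[M\cap R_j=\emptyset]\bigr)=\sum_{N\subseteq\set{x_1,\ldots,x_n}}(-1)^{\card N}\Bigl[M\cap\bigcup_{x_j\in N}R_j=\emptyset\Bigr].
\]
Substituting into the formula for $\euler I$ and swapping the order of summation, the inner sum $\sum_{M\subseteq\ming I\setminus\bigcup_{x_j\in N}R_j}(-1)^{\card M}$ vanishes unless $\bigcup_{x_j\in N}R_j=\ming I$, in which case it equals $1$. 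This is precisely the condition that $N$ covers every row, so the expression collapses to the formula for $\euler{\idtrans I}$.

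The only subtle point is the use of a non-minimal generating set for $\idtrans I$: two columns of $\idtomat I$ may coincide, or one may dominate another, making the corresponding transposed rows redundant. Since the inclusion--exclusion derivation of $\hilbNum J$ nowhere requires minimality, this is purely a bookkeeping issue; the main technical content of the proof is the double inclusion--exclusion in the previous paragraph.
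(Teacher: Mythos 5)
Your proof is correct, but it takes a genuinely different route from the paper's. The paper proves Theorem \ref{thm:eulerTransId} by induction on the number of variables: it fixes a variable $x_i$, sets $J\defeq\ideal{\ming I\setminus\ideal{x_i}}$, derives the parallel splitting identities $\euler I=\euler{I:x_i}-\euler{J:x_i}$ and $\euler{\idtrans I}=\euler{\idtrans{I:x_i}}-\euler{\idtrans{J:x_i}}$ from Equations \eqref{eqn:alg} and \eqref{eqn:removeGen}, and then needs Lemma \ref{lem:domColEuler} plus some matrix bookkeeping to absorb the non-minimal rows that the colon can create before the induction hypothesis applies. You instead give a direct, non-inductive computation: the inclusion--exclusion (Taylor-complex) expansion $\hilbNum J=\sum_{T}(-1)^{\card T}\lcm(T)$, which holds for an arbitrary, not necessarily minimal, generating set, turns $\euler I$ into a signed count of sets of rows of $\idtomat I$ covering all columns and $\euler{\idtrans I}$ into a signed count of sets of columns covering all rows, and a second inclusion--exclusion (the collapse $\sum_{M\subseteq S}(-1)^{\card M}=[S=\emptyset]$) identifies the two counts. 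I checked the exchange of summations and the collapse; they are sound, and your dismissal of the minimality issue is justified precisely because the lcm expansion is insensitive to redundant generators (even repeated ones, or a generator equal to $1$ when a column of $\idtomat I$ is zero, in which case both sides are $0$). The one ingredient you assert without proof is the lcm expansion itself; it follows from the inclusion--exclusion on standard monomials that you name, though the paper never states it in this generality, so a line of justification would be appropriate. What your route buys is brevity, no induction, no Lemma \ref{lem:domColEuler}, and a formula that makes the row--column symmetry of $\euler I$ completely transparent; what the paper's route buys is that it reuses exactly the splitting identities on which the BCRT and DBMS algorithms are built, so the theorem is established with the same machinery the rest of Section \ref{sec:algAlg} already develops.
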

\begin{proof}
The proof is by induction on the number of variables $n$. Let $I$ be a
square free monomial ideal. Choose a variable $x_i$ and let
\[
J\defeq
\ideal{\ming I\setminus\ideal{x_i}} =
\idealBuilder{m\in\ming I}{x_i\text{ does not divide }m}.
\]
The plan of the proof is to show that
\begin{equation}
\label{eqn:trans1}
\euler I = \euler{I:x_i} - \euler{J:x_i}
\end{equation}
and that
\begin{equation}
\label{eqn:trans2}
\euler{\idtrans I} = \euler{\idtrans{I:x_i}} - \euler{\idtrans{J:x_i}}.
\end{equation}
Recall that we embed $I:x_i$ and $J:x_i$ in a subring that does not
have the variable $x_i$, so the result follows from these two
equations by applying the induction assumption to $I:x_i$ and $J:x_i$.
It only remains to prove Equation \eqref{eqn:trans1} and Equation
\eqref{eqn:trans2}.

\proofPart{\text{Equation }\eqref{eqn:trans1}}
Equation \ref{eqn:alg} with $x_i$ as the pivot implies that
\[ 
\euler I = \euler{I:x_i} + \euler{I+\ideal{x_i}}.
\]
Now $J$ does not have full support and $I+\ideal{x_i}=J+\ideal{x_i}$,
so we get by Equation \eqref{eqn:removeGen} that
\[
\euler{I+\ideal{x_i}} =
\euler{J+\ideal{x_i}} = \euler J - \euler{J:x_i} = -\euler{J:x_i}.
\]

\proofPart{\text{Equation }\eqref{eqn:trans2}} This part of the proof
is easier to follow by the reader drawing pictures of the matrices
involved. Let $v$ be column $x_i$ of $\idtomat I$. Then the entries of
$v$ are indexed by $\ming I$ and if $m\in\ming I$ then $v_m=1$ if and
only if $x_i|m$. Let $A$ be the result of removing column $x_i$ from
$\idtomat I$. Then $\ideal{\trans{\idtomat I}} = \ideal{\trans A} +
\ideal{x^v}$ so Equation \eqref{eqn:removeGen} implies that
\[
\euler{\idtrans I} =
\euler{\ideal{\trans{\idtomat I}}} =
\euler{\ideal{\trans A} + \ideal{x^v}} =
\euler{\ideal{\trans A}} - \euler{\ideal{\trans A}:x^v}.
\]
The colon $I:x_i$ corresponds to removing column $x_i$ of $\idtomat I$
so $\ideal A=I:x_i$. The colon can also reduce the number of minimal
generators, so $\idtomat{I:x_i}$ can have fewer rows than $A$
does. However, those extra rows are exponent vectors of non-minimal
generators so they do not impact $\ideal A$. Then Lemma
\ref{lem:domColEuler} implies that
\[
\euler{\idtrans{I:x_i}} =
\euler{\idtrans{\mattoid A}} =
\euler{\ideal{\trans A}}.
\]

It now suffices to prove that $\idtrans{J:x_i}=\ideal{\trans
  A}:x^v$. Let $B$ be the result of removing column $x_i$ of $\idtomat
I$ and also those rows $m\in\ming I$ such that $v_m=1$. We see that
$\idtomat J$ is $\idtomat I$ with the same rows removed as for
$B$. Doing a colon by $x_i$ removes column $x_i$ so $J:x_i=\ideal B$.

It remains to prove that $\ideal{\trans B}=\ideal{\trans A}:x^v$. Observe that
$\ideal{\trans A}:x^v$ removes those columns $m\in\ming I$ of $\trans
A$ where $v_m=1$ which are the same columns that are removed from
$\trans B$. Both $\trans A$ and $\trans B$ have had row $x_i$ removed
so $\ideal{\trans A}:x^v=\ideal{\trans B}$. We have proven that
\[
\idtrans{J:x_i}=\idtrans{\ideal B}=\ideal{\trans B}=\ideal{\trans A}:x^v
\]
where $\idtrans{\ideal B}=\ideal{\trans B}$ depends on the observation
that no row of $B$ dominates any other.
\end{proof}

\begin{lemma}
\label{lem:domColEuler}
If $\ideal A=\ideal B$ then $\euler{\ideal{\trans
    A}}=\euler{\ideal{\trans B}}$ for matrices $A$ and $B$.
\end{lemma}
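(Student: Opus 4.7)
The plan is to reduce to a one-step claim in which $A'$ differs from $A$ by a single dominated row, and then apply the splitting identity~\eqref{eqn:alg} using the variable corresponding to that row as the pivot.

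Since $\ideal A=\ideal B$, the two matrices share the same unique set of minimal generators, and any other row in either matrix is dominated by some minimal generator. Removing non-minimal rows one at a time from each of $A$ and $B$ -- at every step the row being removed is still dominated by some row of the current matrix, since all minimal generators remain present throughout -- both matrices reduce to the common canonical matrix $C$ whose rows are exactly the minimal generators. Thus it suffices to prove the following one-step claim: if $A'$ is obtained from $A$ by appending a row $r_k$ whose monomial is divisible by the monomial of some existing row $r_{k'}$ of $A$ (equivalently $r_{k'}\le r_k$ componentwise), then $\euler{\ideal{\trans{A'}}}=\euler{\ideal{\trans A}}$.

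For the one-step claim, apply Equation~\eqref{eqn:alg} to $\ideal{\trans{A'}}$ with pivot $y_k$, the variable in the ambient ring corresponding to the new row $r_k$:
\[
\euler{\ideal{\trans{A'}}}=\euler{\ideal{\trans{A'}}:y_k}+\euler{\ideal{\trans{A'}}+\ideal{y_k}}.
\]
The first summand equals $\euler{\ideal{\trans A}}$ by inspection of generators: each generator of $\ideal{\trans{A'}}$ has the form $g_j=\prod_{i:A'_{ij}=1}y_i$; colon by $y_k$ strips the lone factor of $y_k$ from those $g_j$ with $A'_{kj}=1$, and after embedding into $\kappa[y_1,\ldots,y_{k-1}]$ the resulting generators $\prod_{i\ne k,\,A'_{ij}=1}y_i$ are precisely those of $\ideal{\trans A}$.

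The crux is the vanishing of the second summand, and here the dominance hypothesis enters decisively. Let $I$ be the ideal generated by those $g_j$ with $A'_{kj}=0$ (the minimal generators of $\ideal{\trans{A'}}+\ideal{y_k}$ other than $y_k$), and apply Equation~\eqref{eqn:removeGen} with pivot $y_k$ to obtain
\[
\euler{\ideal{\trans{A'}}+\ideal{y_k}}=\euler I-\euler{I:y_k}.
\]
Both quantities vanish by Theorem~\ref{thm:basecase}\eqref{thm:basecase:support}. First, $y_k$ divides no generator of $I$, so $I$ lacks full support in $\kappa[y_1,\ldots,y_k]$ and $\euler I=0$. Second, $I:y_k=I$ as an ideal of $\kappa[y_1,\ldots,y_{k-1}]$, and the hypothesis $r_{k'}\le r_k$ forces $A'_{k'j}=0$ whenever $A'_{kj}=0$, so $y_{k'}$ appears in no generator of $I$ and $I$ lacks full support in the smaller ring as well. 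The main obstacle is exactly the recognition that the row-dominance hypothesis is what makes full support fail after contracting the ambient ring; degenerate cases, such as a zero column of $A'$ (which forces both $\ideal{\trans A}$ and $\ideal{\trans{A'}}$ to lack full support and hence have Euler characteristic zero), are handled uniformly by the same argument.
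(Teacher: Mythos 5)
Your proof is correct and follows essentially the same route as the paper: reduce to adding a single dominated row, split with Equation~\eqref{eqn:alg} on the variable corresponding to that row, identify the colon with the smaller transpose ideal, and use row-dominance to kill the other summand via lack of full support (Theorem~\ref{thm:basecase}\eqref{thm:basecase:support}). The only cosmetic difference is that you establish the vanishing by one further application of Equation~\eqref{eqn:removeGen}, whereas the paper observes directly that $\ideal{\trans A}+\ideal{v_r}$ already lacks full support at the dominated row's variable.
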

\begin{proof}
Assume without loss of generality that no row of $B$ dominates any
other. Then $A$ has all the rows that $B$ does and also some
additional non-minimal rows. Assume by induction that there is only
one additional row $r$. Let $d$ be some other row of $A$ that is
dominated by $r$. Then $d$ and $r$ contribute variables $v_d$ and
$v_r$ to the ambient ring of $\ideal{\trans A}$. We get by Equation
\ref{eqn:alg} that
\[
\euler{\ideal{\trans A}}=
\euler{\ideal{\trans A}:v_r}+
\euler{\ideal{\trans A}+\ideal{v_r}}.
\]
All generators of $\ideal{\trans A}$ that are divisible by $v_d$ are
also divisible by $v_r$, so $\ideal{\trans A}+\ideal{v_r}$ does not
have full support at $v_r$ so $\euler{\ideal{\trans
    A}+\ideal{v_r}}=0$. Therefore we have that
\[
\euler{\ideal{\trans A}}=\euler{\ideal{\trans A}:v_r}=\euler{\ideal{\trans B}},
\]
where we are using that the colon $\ideal{\trans A}:v_r$ corresponds
to removing row $r$ from $A$.
\end{proof}

\subsubsection*{Base Case for $\card{\ming I}=3$}

Assume that all unique variables have been eliminated, that $I$ has
full support and that $\card{\ming I}=3$. Then every variable $x_i$
divides 2 or 3 elements of $\ming I$. We can ignore the variables that
divide all 3 minimal generators as they make no difference to the
Euler characteristic. For every minimal generator there must be at
least one variable that does not divide it. So after removing repeated
variables by taking the transpose twice, we see that $I$ must have the
same Euler characteristic as $\ideal{xy,xz,yz}$. So $\euler I=2$.

\subsubsection*{Partial Base Case for $\card{\ming I}=4$}

Suppose that $\card{\ming I}=4$, that every variable divides exactly
two elements of $\ming I$ and that the number of variables is 4. Then
$\euler I=-1$. There should be more rules like this, though
identifying them by hand is laborious and error prone.

\subsubsection*{Make a Table}
It would be beneficial for each small $k$ to perform a computer search
to make a table of all ideals $I$ with $\card{\ming I}=k$ up to
reordering of the variables and the various techniques for simplifying
an ideal that we have presented. Then the Euler characteristic of
ideals with few generators could be computed through a table look-up.

\subsubsection*{Data Structures}
The exponents of $I$ are all 0 or 1, so we can pack 32 or 64 exponents
into a single 32 or 64 bit machine word, and in that representation
many operations become much faster. We used this standard technique in
our implementation and while the general concept is simple we warn
that the implementation details are tricky.

For sparse or complement-of-sparse exponent vectors, it might pay off
to only record the zero entries and one entries respectively, though
this is not something that we have pursued.

\section{Simplicial Algorithms for Euler Characteristic}
\label{sec:simAlg}

In this section we present two algorithms for Euler characteristic
that work directly with simplicial complexes. These two algorithms are
equivalent to the monomial ideal based algorithms from Section
\ref{sec:algAlg}. We introduce the simplicial versions from the ground
up, so this section is independent from Section \ref{sec:algAlg} and
does not use any algebra. See Section \ref{sec:translation} for more
on the connection between the algebraic and simplicial versions of the
two algorithms.

\subsection{Background and Notation}
\label{sec:simBackground}

In the introduction we wrote that a simplicial complex is a finite set
of sets that is closed under subset. We are going to set up an
algebra-simplicial dictionary, and for that to work we associate a set
of vertices to a simplicial complex. This way a simplicial complex can
have vertices that are not an element of any of its faces. Section
\ref{sec:inverseComtoid} shows why this is necessary. Definition
\ref{def:simWithVertex} adds this vertex set structure to simplicial
complexes.

Let $\pows V$ be the set of all subsets of $V$ for any set $V$.

\begin{definition}
\label{def:simWithVertex}
Given a finite set $V$, a simplicial complex $\Delta$ is a subset of
$\pows V$ that is closed under taking subsets. In other words,
$\Delta$ is a set of subsets of $V$ such that if $\sigma\in\Delta$ and
$\tau\subseteq\sigma$ then $\tau\in\Delta$. The elements of $\Delta$
are called \emph{faces} and the elements of $\vertices\Delta\defeq V$
are called \emph{vertices} of $\Delta$.
\end{definition}

The \emph{facets} of $\Delta$ are the maximal faces of $\Delta$ with
respect to inclusion. The set of facets is denoted by
$\facets\Delta$. The \emph{closure} $\clo D$ of a set of sets
$D\subseteq\pows V$ is the smallest simplicial complex that contains
$F$, namely $\clo F\defeq\cup_{d\in D}\pows d$. A simplicial complex
is uniquely given by its facets since
$\Delta=\clo{\facets\Delta}$. The \emph{complement} $\comple\sigma$ of
a set $\sigma\subseteq\vertices\Delta$ is
$\comple\sigma\defeq\vertices\Delta\setminus\sigma$. Two sets
$\sigma,\tau\subseteq\vertices\Delta$ are \emph{co-disjoint} if their
complements are disjoint, or equivalently if
$\sigma\cup\tau=\vertices\Delta$. If $\tau\subseteq\vertices\Delta$
then
\[
\Delta\vsub\tau\defeq
\setBuilder{\sigma\in\Delta}{\sigma\cap\tau=\emptyset} =
\setBuilder{\sigma\setminus\tau}{\sigma\in\Delta}.
\]
The vertex set of $\Delta\vsub\tau$ is
$\vertices{\Delta\vsub\tau}\defeq\vertices\Delta\setminus\tau$.

We will use that Euler characteristic respects inclusion-exclusion in
the sense that
\[
\euler{\Delta\cup\Delta\p}=\euler\Delta+\euler{\Delta\p}-\euler{\Delta\cap\Delta\p}.
\]

\subsection{Divide...}

Both algorithms we present are divide-and-conquer algorithms. They
take a simplicial complex $\Delta$ and split it into two simpler
complexes $\Delta\p$ and $\Delta\pp$ such that
$\euler\Delta=\euler{\Delta\p}+\euler{\Delta\pp}$. This process
proceeds recursively until all the remaining complexes are simple
enough that they can be processed directly. Splitting (the divide
step) is based on Theorem \ref{thm:com}.

\begin{theorem}
\label{thm:com}
If $\sigma$ is a non-empty set of vertices then
\[
\euler\Delta =
 \euler{\Delta\vsub\comple\sigma} +
 \euler{\Delta\cup\pows\sigma}.
\]
\end{theorem}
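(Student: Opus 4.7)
The plan is to apply the inclusion-exclusion property for the reduced Euler characteristic stated just before the theorem to the decomposition
\[
\Delta\cup\pows\sigma = \Delta \cup \pows\sigma,
\qquad
\Delta\cap\pows\sigma = \setBuilder{\tau\in\Delta}{\tau\subseteq\sigma},
\]
and then identify the intersection with $\Delta\vsub\comple\sigma$ and compute $\euler{\pows\sigma}$.

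First I would observe that, as a set of sets, $\Delta\vsub\comple\sigma=\setBuilder{\tau\in\Delta}{\tau\cap\comple\sigma=\emptyset}=\setBuilder{\tau\in\Delta}{\tau\subseteq\sigma}=\Delta\cap\pows\sigma$. The two complexes differ only in their recorded vertex set ($\sigma$ versus $V$), and since $\euler{\cdot}$ depends only on the faces via $\euler{\Gamma}=-\sum_{\tau\in\Gamma}(-1)^{\card\tau}$, we have $\euler{\Delta\vsub\comple\sigma}=\euler{\Delta\cap\pows\sigma}$.

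Next I would compute $\euler{\pows\sigma}$. Because $\sigma\neq\emptyset$,
\[
\euler{\pows\sigma}= -\sum_{\tau\subseteq\sigma}(-1)^{\card\tau} = -(1-1)^{\card\sigma} = 0.
\]
This is the only place where the hypothesis $\sigma\neq\emptyset$ enters, and it is the key point: were $\sigma=\emptyset$ the reduced Euler characteristic would be $-1$ instead of $0$ and the identity would fail.

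Finally, applying the inclusion-exclusion formula recalled in Section \ref{sec:simBackground} to $\Delta$ and $\pows\sigma$ gives
\[
\euler{\Delta\cup\pows\sigma} = \euler\Delta + \euler{\pows\sigma} - \euler{\Delta\cap\pows\sigma} = \euler\Delta - \euler{\Delta\vsub\comple\sigma},
\]
and rearranging yields the desired identity. There is no real obstacle: the only subtlety is the bookkeeping around vertex sets (making sure $\Delta\vsub\comple\sigma$ and $\Delta\cap\pows\sigma$ agree Euler-characteristic-wise even though their declared vertex sets differ) and the recognition that the reduced Euler characteristic of a nonempty simplex vanishes.
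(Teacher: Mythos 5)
Your proposal is correct and follows essentially the same route as the paper's own proof: identify $\Delta\cap\pows\sigma$ with $\Delta\vsub\comple\sigma$, note $\euler{\pows\sigma}=0$ since $\sigma\neq\emptyset$, and apply inclusion-exclusion. The only difference is that you spell out the binomial computation of $\euler{\pows\sigma}$ and the vertex-set bookkeeping, which the paper leaves implicit.
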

\begin{proof}
First observe that
\[
\Delta\cap\pows\sigma =
 \setBuilder{\tau\in\Delta}{\tau\subseteq\sigma} =
 \setBuilder{\tau\in\Delta}{\tau\cap\comple\sigma=\emptyset} =
 \Delta\vsub\comple\sigma.
\]
Euler characteristic respects inclusion-exclusion so then
\begin{align*}
\euler{\Delta\cup\pows\sigma} &=
 \euler\Delta +
 \euler{\pows\sigma} -
 \euler{\Delta\cap\pows\sigma} \\&=
 \euler{\Delta} - \euler{\Delta\vsub\comple\sigma},
\end{align*}
where $\euler{\pows\sigma}=0$ as $\sigma$ is non-empty.
\end{proof}

In analogy with the algebraic BCRT algorithm from Section
\ref{sec:algAlg}, we can use the equation in Theorem \ref{thm:com} as
written to split a simplicial complex $\Delta$ into the simpler
complexes $\Delta\vsub\comple\sigma$ and $\Delta\cup\pows\sigma$. We
will refer to this algorithm as the \emph{simplicial BCRT
  algorithm}. It is not immediately clear that
$\Delta\vsub\comple\sigma$ and $\Delta\cup\pows\sigma$ are simpler
than $\Delta$ is. For now, consider that $\Delta\vsub\comple\sigma$
has fewer vertices if $\sigma\subsetneq\vertices\Delta$ and that
$\Delta\cup\pows\sigma$ has fewer non-faces if $\sigma\notin\Delta$.

We call the $\sigma$ in Theorem \ref{thm:com} the
\emph{pivot}. Section \ref{sec:simPivots} explores strategies for
selecting pivots.

We can also write the equation in Theorem \ref{thm:com} as
\[
\euler{\Delta\cup\pows\sigma} =
\euler\Delta - \euler{\Delta\vsub\comple\sigma}.
\]
Let $D$ be a simplicial complex and let $\sigma\in\facets D$. If
$\Delta\defeq\clo{\facets D\setminus\set\sigma}$ then
\begin{equation}
\label{eqn:simSplitFacet}
\euler D =
\euler{\Delta\cup\pows\sigma} =
\euler\Delta - \euler{\Delta\vsub\comple\sigma}.
\end{equation}
In analogy with the algebraic DBMS algorithm from Section
\ref{sec:algAlg}, we can use this equation to split a simplicial
complex $D$ into the simpler complexes $\Delta$ and
$\Delta\vsub\comple\sigma$. We will refer to this algorithm as the
\emph{simplicial DBMS algorithm}. Note that the pivots in the
simplicial DBMS algorithm are facets of the complex, which would not
make sense for the simplicial BCRT algorithm.

\subsection{... and Conquer}
A simplicial complex $\Delta$ is a base case for both algorithms when
Theorem \ref{thm:simBasecase} applies. The improvements in Section
\ref{sec:simImp} enable further base cases.
\begin{theorem}
\label{thm:simBasecase}
Let $\Delta\neq\emptyset$ be a simplicial complex. Then
\begin{enumerate}
\item if $\Delta$ is a cone then $\euler\Delta=0$,
\label{thm:simBasecase:support}
\item if $\Delta$ is not a cone and the facets of $\Delta$ are
  pairwise co-disjoint then $\euler\Delta=(-1)^{\card{\facets I}}$,
\label{thm:simBasecase:prime}
\item if $\Delta$ is not a cone and $\card{\facets I}=2$ then $\euler\Delta=1$.
\label{thm:simBasecase:two}
\end{enumerate}
\end{theorem}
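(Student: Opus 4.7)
The plan is to treat part (1) by an involution argument and parts (2) and (3) by inclusion--exclusion. The key computation behind the latter two is that $\euler{\pows\tau}=-(1-1)^{|\tau|}$, which vanishes when $\tau\neq\emptyset$ and equals $-1$ when $\tau=\emptyset$.

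For (1), fix an apex $v$ of the cone, a vertex contained in every facet. Since every face of $\Delta$ is a subset of some facet that contains $v$, the symmetric-difference map $\sigma\mapsto\sigma\,\triangle\,\{v\}$ sends $\Delta$ into $\Delta$. It is a fixed-point-free involution that flips the parity of $|\sigma|$, so pairing faces yields $\sum_{\sigma\in\Delta}(-1)^{|\sigma|}=0$ and hence $\euler{\Delta}=0$.

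For (2) and (3), write $\Delta=\bigcup_{i=1}^k\pows{\sigma_i}$ over $\facets{\Delta}=\{\sigma_1,\ldots,\sigma_k\}$ and iterate the inclusion--exclusion identity stated at the end of Section~\ref{sec:simBackground}, using $\bigcap_{i\in S}\pows{\sigma_i}=\pows{\bigcap_{i\in S}\sigma_i}$, to get
\[
\euler{\Delta}\;=\;\sum_{\emptyset\neq S\subseteq[k]}(-1)^{|S|+1}\euler{\pows{\bigcap_{i\in S}\sigma_i}}\;=\;\sum_{S\,:\,\bigcap_{i\in S}\sigma_i=\emptyset}(-1)^{|S|}.
\]
For (3) the not-a-cone hypothesis with $k=2$ forces $\sigma_1\cap\sigma_2=\emptyset$, so only $S=\{1,2\}$ contributes and $\euler{\Delta}=1$. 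For (2) the pairwise co-disjointness makes the complements $\comple{\sigma_1},\ldots,\comple{\sigma_k}$ pairwise disjoint, so $|\bigcap_{i\in S}\sigma_i|=|V|-\sum_{i\in S}|\comple{\sigma_i}|$; not-a-cone gives $\bigcup_{i=1}^k\comple{\sigma_i}=V$; and for $k\geq 2$ each $\comple{\sigma_i}$ is nonempty, since $\comple{\sigma_i}=\emptyset$ would make $\sigma_i=V$ contain every other facet, contradicting maximality. Together these imply $\bigcap_{i\in S}\sigma_i=\emptyset$ iff $S=[k]$, giving $\euler{\Delta}=(-1)^k$. The degenerate case $k=1$ (where co-disjointness is vacuous) forces $\sigma_1=\emptyset$, so $\Delta=\{\emptyset\}$ and $\euler{\Delta}=-1=(-1)^1$ as well.

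The main obstacle is ensuring that the inclusion--exclusion sum in (2) collapses to a single non-vanishing term; this rests on observing, via facet maximality, that every individual complement $\comple{\sigma_i}$ is nonempty when $k\geq 2$, and that the complements then both cover $V$ and do so disjointly. Beyond this combinatorial bookkeeping, the remaining steps are routine manipulations of the reduced Euler characteristic.
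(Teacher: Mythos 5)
Your proof is correct, but it takes a genuinely different route from the paper. The paper dismisses part (1) as well known, proves part (2) by induction on $\card{\facets\Delta}$ via Lemma \ref{lem:codisjointFacet} (which in turn rests on the splitting identity of Theorem \ref{thm:com} together with the cone case), and handles part (3) by reducing it to part (2) after discarding unused vertices. You instead give a self-contained direct argument: a fixed-point-free parity-flipping involution $\sigma\mapsto\sigma\,\triangle\,\set v$ at the apex for (1), and for (2) and (3) a closed-form inclusion--exclusion over the facets using $\euler{\pows\tau}=-(1-1)^{\card\tau}$, so that $\euler\Delta=\sum(-1)^{\card S}$ over the nonempty $S$ with $\bigcap_{i\in S}\sigma_i=\emptyset$; the not-a-cone hypothesis (complements cover $V$), co-disjointness (complements pairwise disjoint), and facet maximality (complements nonempty when $k\geq 2$) then collapse the sum to the single term $S=[k]$. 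This is in fact the same technique the paper itself deploys later in the proof of Theorem \ref{thm:simProd}, so your argument fits the paper's toolkit while avoiding the recursion; what the paper's route buys is brevity and consistency with the algorithmic divide step, while yours buys a uniform, elementary treatment of all three parts (and handles unused vertices in (3) automatically). One small point worth adding in (3): to discard the singleton sets $S$ you need $\sigma_1,\sigma_2\neq\emptyset$, which follows from facet maximality exactly as in your $k=1$ discussion; also note your genuinely different handling of (1) could be replaced by, or used to justify, the paper's ``well known'' claim.
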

\begin{proof}
\proofPart{\eqref{thm:simBasecase:support}}
This is well known and not hard to prove.

\proofPart{\eqref{thm:simBasecase:prime}} By induction on
$\card{\facets\Delta}$ using Lemma \ref{lem:codisjointFacet}.

\proofPart{\eqref{thm:simBasecase:two}} The two facets are co-disjoint
when ignoring unused vertices.
\end{proof}

\begin{lemma}
\label{lem:codisjointFacet}
Let $D$ be a simplicial complex and let $\sigma\in\facets D$ such that
$\sigma\neq\vertices D$ is co-disjoint to every other facet of
$D$. Then $\euler D=-\euler{\Delta\vsub\comple\sigma}$ where
$\Delta\defeq\clo{\facets D\setminus\set\sigma}$.
\end{lemma}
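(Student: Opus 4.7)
The plan is to apply Equation~\eqref{eqn:simSplitFacet} with $\sigma$ as the pivot, which gives
\[
\euler D = \euler\Delta - \euler{\Delta\vsub\comple\sigma},
\]
and then to show that $\euler\Delta = 0$ by exhibiting $\Delta$ as a cone. Theorem~\ref{thm:simBasecase}\eqref{thm:simBasecase:support} will then close the argument.

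The core step is to locate a single vertex that lies in every facet of $\Delta$. First I would observe that $\facets\Delta = \facets D\setminus\set\sigma$, because deleting a single facet from $D$ cannot promote any remaining face to maximal status beyond what it already was. Next, for each $\tau\in\facets\Delta$ the hypothesis that $\tau$ and $\sigma$ are co-disjoint unfolds to $\sigma\cup\tau=\vertices D$, which is equivalent to the inclusion $\comple\sigma\subseteq\tau$. Since $\sigma\neq\vertices D$, the complement $\comple\sigma$ is nonempty, so any $v\in\comple\sigma$ is a vertex lying in every facet of $\Delta$, making $\Delta$ a cone with apex $v$.

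The main subtlety, rather than a real obstacle, is the degenerate case $\facets D=\set\sigma$, in which $\Delta$ has no facets and so is not literally a cone. I would dispose of this case by appealing directly to the definition of reduced Euler characteristic, which gives $\euler\Delta=0$ for the empty complex, so Equation~\eqref{eqn:simSplitFacet} still delivers the stated identity without invoking the cone lemma.
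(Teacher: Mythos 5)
Your proposal is correct and follows essentially the same route as the paper: split off $\sigma$ via Equation~\eqref{eqn:simSplitFacet} and observe that co-disjointness forces every vertex of $\comple\sigma$ (nonempty since $\sigma\neq\vertices D$) into every remaining facet, so $\Delta$ is a cone and $\euler\Delta=0$. Your explicit treatment of the degenerate case $\facets D=\set\sigma$ is a small refinement the paper glosses over, but it does not change the argument.
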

\begin{proof}
Let $v\notin\sigma$ and $\tau\in\facets\Delta$. Then $v\in\tau$ so
$\Delta$ is a cone and $\euler\Delta=0$. By Theorem \ref{thm:com}
\[
\euler D =
\euler{\Delta\cup\pows p} =
\euler\Delta - \euler{\Delta\vsub\comple\sigma} =
-\euler{\Delta\vsub\comple\sigma}.
\qedhere
\]
\end{proof}

\subsection{Termination and Complexity}

It is clear that the simplicial DBMS algorithm terminates since
$\card{\facets\Delta}$ decreases strictly at each step. For the
simplicial BCRT algorithm, termination requires that we choose the
pivots $\sigma$ such that $\sigma\subsetneq\vertices\Delta$ and
$\sigma\notin\Delta$ since otherwise we get an infinite number of
steps from $\Delta=\Delta\vsub\comple\sigma$ or
$\Delta=\Delta\cup\pows\sigma$

If we cannot choose a $\sigma$ such that
$\sigma\subsetneq\vertices\Delta$ and $\sigma\notin\Delta$ then
$\Delta=\pows{\vertices\Delta}$ which is a base case. If we always
choose pivots $\sigma$ such that $\sigma\subsetneq\vertices\Delta$ and
$\sigma\notin\Delta$ then termination of the simplicial BCRT algorithm
follows from the fact that either the number of vertices
$\card{\vertices\Delta}$ or the number of non-faces
$\card{\comple\Delta}=2^{\card{\vertices\Delta}}-\card\Delta$
decreases between any two steps. In fact some consideration shows that
$\card{\comple\Delta}$ decreases strictly at each step.

We have seen that the DBMS algorithm gets rid of at least one facet at
each step. The BCRT algorithm gets rid of at least one vertex at each
step if the pivot is chosen to be of the form
$\sigma\defeq\comple{\set e}$. It is immediate that $e$ is not a
vertex of $\Delta\vsub\comple\sigma$. To see that $e$ can also be
removed from $\Delta\cup\pows\sigma$, observe that $\sigma$ is the
only facet that does not contain $e$, so $e$ is
$(\Delta\cup\pows\sigma)$-independent from the other vertices and so
can be removed using the independent vertices technique from Section
\ref{sec:simImp}.

Since the Euler characteristic problem is \sPc{} we expect both
algorithms to run in at least single exponential time. Taking nerves
as described in Section \ref{sec:simImp}, we can interchange the
number of vertices $\card{\vertices\Delta}$ with the number of facets
$\card{\facets\Delta}$. So if
$l\defeq\min(\card{\vertices\Delta},\card{\facets\Delta})$ then both
the BCRT and DBMS algorithms have $O(q2^l)$ asymptotic worst case time
complexity where $q$ is a polynomial. So both algorithms run in single
exponential time. We expect that a more careful analysis could reduce
the base of the exponential.

\subsection{Pivot Selection}
\label{sec:simPivots}

We have proven that the BCRT and DBMS algorithms terminate in a finite
amount of time. To be useful in practice, the amount of time until
termination should be small rather than just finite. The strategy used
for selecting pivots when splitting an ideal has a significant impact
on performance. We describe several different pivot selection
strategies here and compare them empirically in Section
\ref{sec:bench}.

A \emph{rare vertex} is a vertex that belongs to a minimum number of
facets. A \emph{popular vertex} is a vertex $e$ that is an element of
a maximum number of facets with the constraint that $\comple{\set
  e}\notin\facets\Delta$.

If there are several candidate pivots that fit a given pivot selection
strategy, then the pivot used is chosen in an arbitrary deterministic
way among the tied candidates.

\subsubsection*{BCRT Pivot Selection Strategies}
Recall that BCRT pivots $\sigma$ satisfy
$\sigma\subsetneq\vertices\Delta$ and $\sigma\notin\Delta$.

The reason that some of these names seem opposite of their definition
is that the names come from the algebraic setting and the translation
to simplicial complexes involves taking a complement. For example a
rare variable of an ideal translates to a popular vertex of the
corresponding simplicial complex.

\begin{description}
\item[popvar] The pivot is $\comple{\set e}$ for $e$ a rare vertex.
\item[rarevar] The pivot is $\comple{\set e}$ for $e$ a popular vertex.
\item[random] The pivot is $\comple{\set e}$ for $e$ a random vertex
  such that $\comple{\set e}\notin\facets\Delta$.
\item[popgcd] Let $e$ be a rare vertex. The pivot is the union of
  three facets chosen uniformly at random among those facets that do
  not contain $e$.
\end{description}

The simplicial BCRT algorithm presented here is based on the algebraic
BCRT algorithm for \hps{}, and the strategies \pname{popvar} and
\pname{popgcd} have been found to work well for \hps{} computation. So
we also try them here. \pname{rarevar} and \pname{random} have been
included to have something to compare to.

\subsubsection*{DBMS Pivot Selection Strategies}

Recall that DBMS pivots are elements of $\facets\Delta$.

\begin{description}
\item[rarevar] The pivot is a facet that does not contain some popular
  vertex.
\item[popvar] The pivot is a facet that does not contain some rare vertex.
\item[maxsupp] The pivot is a facet of minimum size.
\item[minsupp] The pivot is a facet of maximum size.
\item[random] The pivot is a facet chosen uniformly at random.
\item[rarest] The pivot is a facet that lacks (does not contain) a
  maximum number of popular vertices. Ties are broken by picking the
  facet that lacks the maximum number of second-most-popular vertices
  and so on.
\item[raremax] The pivot is chosen according to \pname{rarevar} where
  ties are broken according to \pname{maxsupp}.
\end{description}

\subsection{Improvements}
\label{sec:simImp}

We present several improvements to the DBMS and BCRT algorithms.

\subsubsection*{Independent Vertices}

If $\Delta$ and $\Gamma$ are simplicial complexes with disjoint vertex
sets $\vertices\Delta$ and $\vertices\Gamma$, then
\[
\Delta\simProd\Gamma\defeq
(\Delta\times\pows{\vertices\Gamma}) \cup
(\pows{\vertices\Delta}\times\Gamma).
\]

So if $\Delta\defeq\clo{\set x,\set y}$ and $\Gamma\defeq\clo{\set a,\set b,\set c}$
then
\[
\Delta\simProd\Gamma=\clo{
  \set{x,a,b,c},
  \set{y,a,b,c},
  \set{x,y,a},
  \set{x,y,b},
  \set{x,y,c}}.
\]
Using Theorem \ref{thm:simProd} we can compute
$\euler{\Delta\simProd\Gamma}$ in terms of $\euler\Delta$ and
$\euler\Gamma$. So if we are computing $\euler\Psi$ for a simplicial
complex $\Psi$, then we could significantly simplify the task by
finding simplicial complexes $\Delta$ and $\Gamma$ such that
$\Psi=\Delta\simProd\Gamma$.

We say that two subsets $A,B\subseteq\vertices\Psi$ are
\emph{$\Psi$-independent} if $\facets\Psi$ is the disjoint union
of
\[
F_A\defeq
\setBuilder{\sigma\in\facets\Psi}{\sigma\setminus A=\comple A}
\ \text{ and }\ 
F_B\defeq
\setBuilder{\sigma\in\facets\Psi}{\sigma\setminus B=\comple B}.
\]
If $\Psi=\Delta\simProd\Gamma$ then it is immediate that
$\vertices\Delta$ and $\vertices\Gamma$ are $\Psi$-independent. On the
other hand, if $A$ and $B$ are $\Psi$-independent, then
$\Psi=\Delta\simProd\Gamma$ where
\[
\Delta\defeq \clo{F_A}\vsub\comple A
\ \text{ and }\ 
\Gamma\defeq \clo{F_B}\vsub\comple B.
\]
We have proven Theorem \ref{thm:simDecom}, which together with Theorem
\ref{thm:simProd} generalizes Lemma \ref{lem:codisjointFacet}.
\begin{theorem}
\label{thm:simDecom}
A simplicial complex $\Psi$ can be written as
$\Psi=\Delta\simProd\Gamma$ if and only if there is a
$\Psi$-independent pair $(A,B)$.
\end{theorem}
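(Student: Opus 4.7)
The plan is to prove both directions of the biconditional in turn, leveraging the discussion immediately preceding the theorem. The forward direction is the easier of the two, so I would dispatch it first. Assuming $\Psi=\Delta\simProd\Gamma$ with disjoint vertex sets $X\defeq\vertices\Delta$ and $Y\defeq\vertices\Gamma$, set $A\defeq X$ and $B\defeq Y$; then $\comple A=Y$ and $\comple B=X$. From the definition
\[
\Delta\simProd\Gamma = (\Delta\times\pows Y)\cup(\pows X\times\Gamma),
\]
the facets of $\Psi$ are exactly the sets $\sigma\cup Y$ for $\sigma\in\facets\Delta$ and $\tau\cup X$ for $\tau\in\facets\Gamma$. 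A facet of the first form satisfies $(\sigma\cup Y)\setminus A=Y=\comple A$, and a facet of the second form satisfies $(\tau\cup X)\setminus B=X=\comple B$, so each lies in $F_A$ or $F_B$ respectively. Verifying disjointness of $F_A$ and $F_B$ is a small set-theoretic check (a facet in both forms would force $\sigma=X$ and $\tau=Y$, which one handles as an edge case or by observing it does not obstruct the partition).

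For the backward direction, assume $(A,B)$ is a $\Psi$-independent pair. Define $\Delta\defeq\clo{F_A}\vsub\comple A$ and $\Gamma\defeq\clo{F_B}\vsub\comple B$, so that $\vertices\Delta\subseteq A$ and $\vertices\Gamma\subseteq B$. The first sub-step is to confirm $\vertices\Delta\cap\vertices\Gamma=\emptyset$: since every facet in $F_A$ contains $\comple A$ and every facet in $F_B$ contains $\comple B$, any vertex in $\comple A\cap\comple B$ would be common to all facets, and the reduction step $\vsub\comple A$ or $\vsub\comple B$ removes exactly the overlap, so after the subtraction the vertex sets become disjoint. With disjointness in hand, $\Delta\simProd\Gamma$ is well-defined.

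The bulk of the argument is then to verify the set equality $\Psi=\Delta\simProd\Gamma$, which I would do by comparing facets. A facet of $\Delta\simProd\Gamma$ has the form $\sigma\cup\vertices\Gamma$ for some $\sigma\in\facets\Delta$ (and symmetrically for $\Gamma$), and by the construction of $\Delta$ as $\clo{F_A}\vsub\comple A$ these are precisely the elements of $F_A$; similarly the other family recovers $F_B$. Since $\facets\Psi=F_A\sqcup F_B$ by hypothesis, the two simplicial complexes have the same facets and hence coincide.

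The main obstacle will be bookkeeping rather than any deep idea: keeping straight the complementation in the definition of $\vsub$, the removal of $\comple A$ versus $\comple B$, and confirming that $\facets{\clo{F_A}\vsub\comple A}$ really equals $\{\sigma\setminus\comple A:\sigma\in F_A\}$ without losing or gaining facets (one must rule out spurious containments between elements of $F_A$ after the subtraction). Once these set-theoretic identifications are nailed down, the theorem follows and, combined with Theorem \ref{thm:simProd}, recovers Lemma \ref{lem:codisjointFacet} as the special case in which one factor has a single facet.
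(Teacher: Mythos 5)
Your proposal follows the paper's own route: the paper's proof of Theorem \ref{thm:simDecom} is exactly the discussion preceding its statement---the forward direction by taking $(A,B)=(\vertices\Delta,\vertices\Gamma)$, the backward direction via the explicit construction $\Delta\defeq\clo{F_A}\vsub\comple A$ and $\Gamma\defeq\clo{F_B}\vsub\comple B$ followed by matching facets---so in outline you and the paper agree, and your facet-by-facet verification is the right way to fill in the details the paper leaves implicit.

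The one place where your added detail does not do what you claim is the disjointness step in the backward direction. After the subtractions the vertex sets are $\vertices\Delta=A$ and $\vertices\Gamma=B$ themselves, so what you need is $A\cap B=\emptyset$ (and, for $\Psi=\Delta\simProd\Gamma$ to have the correct vertex set, also $A\cup B=\vertices\Psi$); your argument about $\comple A\cap\comple B$ bears only on the second condition, and what it actually shows is that a vertex of $\comple A\cap\comple B$ lies in every facet, which would make the constructed complex miss that vertex rather than repair anything. Moreover, neither condition follows from the displayed definition of a $\Psi$-independent pair as literally written: for $\Psi$ with vertex set $\set{x,y,z}$ and facets $\set x,\set z$, the pair $A=\set{x,y}$, $B=\set{y,z}$ makes $\facets\Psi$ the disjoint union of $F_A$ and $F_B$, yet $\Psi$ admits no decomposition $\Delta\simProd\Gamma$ with disjoint nonempty vertex sets. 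So disjointness (indeed complementarity) of $A$ and $B$ must be read as part of the definition, as the forward direction suggests; the paper relies on this silently, and with that reading your argument goes through unchanged. A similar caveat applies to your aside on the forward direction: when $\Delta$ and $\Gamma$ are both full simplices the unique facet lies in both $F_A$ and $F_B$, so the disjointness of $F_A$ and $F_B$ genuinely fails rather than ``not obstructing the partition''---another degenerate case the paper glosses over rather than one your observation resolves.
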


\begin{theorem}
\label{thm:simProd}
Let $\Delta$ and $\Gamma$ be simplicial complexes with disjoint
non-empty vertex sets $\vertices\Delta$ and $\vertices\Gamma$. Then
\[
\euler{\Delta\simProd\Gamma}=\euler\Delta\euler\Gamma.
\]
\end{theorem}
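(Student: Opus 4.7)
The plan is to apply inclusion-exclusion to $\Delta\simProd\Gamma$, writing it as a union of two pieces whose Euler characteristics are easy to compute thanks to the disjointness of $\vertices\Delta$ and $\vertices\Gamma$. Let $A\defeq\Delta\times\pows{\vertices\Gamma}$ and $B\defeq\pows{\vertices\Delta}\times\Gamma$, regarded as simplicial complexes on vertex set $\vertices\Delta\cup\vertices\Gamma$ via the identification $(\sigma,\tau)\leftrightarrow\sigma\cup\tau$ (which is a bijection since the vertex sets are disjoint). Then $\Delta\simProd\Gamma=A\cup B$ by definition, and a direct check shows $A\cap B = \Delta\times\Gamma$, the simplicial complex whose faces are $\sigma\cup\tau$ with $\sigma\in\Delta$ and $\tau\in\Gamma$.

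The key lemma I would prove is: for any simplicial complexes $X,Y$ with disjoint vertex sets,
\[
\euler{X\times Y}=-\euler X\cdot\euler Y.
\]
This follows from the definition by a short computation. Since $|\sigma\cup\tau|=|\sigma|+|\tau|$ when $\sigma$ and $\tau$ are disjoint,
\[
-\euler{X\times Y}=\sum_{\sigma\in X,\;\tau\in Y}(-1)^{|\sigma|+|\tau|}=\left(\sum_{\sigma\in X}(-1)^{|\sigma|}\right)\left(\sum_{\tau\in Y}(-1)^{|\tau|}\right)=(-\euler X)(-\euler Y),
\]
giving the lemma after multiplying through by $-1$.

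Applying the lemma, $\euler A=-\euler\Delta\cdot\euler{\pows{\vertices\Gamma}}$ and $\euler B=-\euler{\pows{\vertices\Delta}}\cdot\euler\Gamma$. Because $\vertices\Gamma$ is non-empty, the full simplex $\pows{\vertices\Gamma}$ is a cone, hence $\euler{\pows{\vertices\Gamma}}=0$ by Theorem \ref{thm:simBasecase}\eqref{thm:simBasecase:support}; equivalently $\sum_{\tau\subseteq\vertices\Gamma}(-1)^{|\tau|}=0$. The same holds for $\vertices\Delta$, so $\euler A=\euler B=0$. Meanwhile the lemma gives $\euler{A\cap B}=-\euler\Delta\euler\Gamma$. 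Using that Euler characteristic respects inclusion-exclusion,
\[
\euler{\Delta\simProd\Gamma}=\euler A+\euler B-\euler{A\cap B}=0+0-(-\euler\Delta\euler\Gamma)=\euler\Delta\euler\Gamma,
\]
as required.

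The only real obstacle is careful sign bookkeeping: because the paper's $\euler{\cdot}$ counts the empty face, the join-type formula carries an extra minus sign compared to the more familiar $\chi(X*Y)=\chi(X)\chi(Y)$, and one must not forget to use that both $\vertices\Delta$ and $\vertices\Gamma$ are non-empty to make the cone argument go through.
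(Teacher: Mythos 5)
Your proof is correct, and it takes a genuinely different route from the paper's. You decompose $\Delta\simProd\Gamma$ directly as the union of $A=\Delta\times\pows{\vertices\Gamma}$ and $B=\pows{\vertices\Delta}\times\Gamma$, kill $\euler A$ and $\euler B$ via the cone/full-simplex observation, identify $A\cap B$ with the join-type complex $\Delta\times\Gamma$, and finish with one application of three-term inclusion-exclusion together with your lemma $\euler{X\times Y}=-\euler X\,\euler Y$, which is the standard multiplicativity of reduced Euler characteristic for joins and follows from the one-line factorization of the face sum over disjoint vertex sets (the unique decomposition $\rho=(\rho\cap\vertices\Delta)\cup(\rho\cap\vertices\Gamma)$ is where disjointness enters, and the non-emptiness of both vertex sets is exactly what makes the two cone terms vanish, as you note). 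The paper instead works entirely at the level of facet descriptions: it expands $\euler\Delta$ and $\euler{\Delta\simProd\Gamma}$ by inclusion-exclusion over all subsets of (modified) facets, and then factors the resulting double sum using a combinatorial criterion for when $\cap(d\cup g)$ is empty (its Equation \eqref{eqn:powsDecom}). Your argument is shorter and more conceptual, hiding the sign bookkeeping in the join lemma; the paper's argument avoids introducing the auxiliary complexes $A$, $B$, $\Delta\times\Gamma$ and stays within the facet-based formalism that matches its algorithmic viewpoint, at the cost of a longer computation.
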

\begin{proof}
Choose any set $D$ such that $\Delta=\clo D$. Then we get by inclusion-exclusion that
\begin{align*}
\euler\Delta =
\euler{\bigcup_{\sigma\in D}\pows\sigma} &=
\sum_{v\subseteq D}(-1)^{\card v+1}\euler{\bigcap_{\sigma\in v}\pows\sigma} \\&=
\sum_{v\subseteq D}(-1)^{\card v+1}\euler{\pows{\cap v}}.
\end{align*}
Let $F_\Delta\defeq\facets\Delta\times\set{\vertices\Gamma}$ and
$F_\Gamma\defeq\set{\vertices\Delta}\times\facets\Gamma$. Then
$\Delta\simProd\Gamma=\clo{F_\Delta\cup F_\Gamma}$ so in the same way
we get by inclusion-exclusion that
\begin{align*}
\euler{\Delta\simProd\Gamma} &=
\sum_{v\subseteq F_\Delta\cup F_\Gamma}(-1)^{\card v+1}\euler{\pows{\cap v}} \\&=
\sum_{d\subseteq F_\Delta}\sum_{g\subseteq F_\Gamma}
  (-1)^{\card d+\card g+1} \euler{\pows{\cap(d\cup g)}}.
\end{align*}
Let $d\p\subseteq\facets\Delta$ and $g\p\subseteq\facets\Gamma$ such
that $d=d\p\cup\vertices\Gamma$ and
$g=g\p\cup\vertices\Delta$. We will prove that
\begin{equation}
\label{eqn:powsDecom}
\euler{\pows{\cap(d\cup g)}} =
-\euler{\pows{\cap d\p}} \euler{\pows{\cap g\p}}.
\end{equation}
Since $\card d=\card{d\p}$ and $\card g=\card{g\p}$ we then get that
\begin{align*}
\euler{\Delta\simProd\Gamma} &=
\sum_{d\p\subseteq\facets\Delta}\sum_{g\p\subseteq\facets\Gamma}
  (-1)^{\card{d\p}+\card{g\p}}
  \euler{\pows{\cap d\p}}\euler{\pows{\cap g\p}} \\&=
\sum_{d\p\subseteq\facets\Delta}(-1)^{\card{d\p}+1}\euler{\pows{\cap d\p}}
  \sum_{g\p\subseteq\facets\Gamma}(-1)^{\card{g\p}+1}\euler{\pows{\cap g\p}} \\&=
\euler\Delta\euler\Gamma.
\end{align*}

It only remains to prove Equation \eqref{eqn:powsDecom}. Observe that
$\euler{\pows\tau}=0$ if $\tau$ is non-empty and otherwise
$\euler{\pows{\tau}}=-1$. So Equation \eqref{eqn:powsDecom} is
equivalent to the statement that
\[
\cap(d\cup g)=\emptyset \quad\biimp\quad
\cap d\p=\emptyset\text{ and }\cap g\p=\emptyset.
\]
As every face in $d$ contains $\vertices\Gamma$ and every face in $g$
contains $\vertices\Delta$, the only way for $\cap(d\cup g)$ to be
empty is if $\cap d=\vertices\Gamma$ and $\cap g=\vertices\Delta$. The
statement then follows from the observation that $\cap
d=\vertices\Gamma$ if and only if $\cap d\p=\emptyset$ and likewise
$\cap g=\vertices\Delta$ if and only if $\cap g\p=\emptyset$.
\end{proof}

\subsubsection*{Eliminate Abundant Vertices}
Suppose $e$ is an element of every facet of $D$ except
$\sigma\in\facets D$. Use Equation \eqref{eqn:simSplitFacet} with
$\sigma$ as the pivot to get that
\[
\euler D =
\euler{\Delta\cup\pows\sigma} =
\euler\Delta - \euler{\Delta\vsub\comple\sigma} =
\euler{\Delta\vsub\comple\sigma},
\]
since $\euler\Delta=0$ as every facet of $\Delta$ contains $e$ so
$\Delta$ is a cone.

\subsubsection*{Take Nerves}

The \emph{nerve} of a simplicial complex $\Delta$ is
\[
\nerve\Delta \defeq
\setBuilder{v\subseteq\facets\Delta}{\cap v\neq\emptyset}.
\]
A complex and its nerve have the same homotopy type
\cite[Thm. 10]{nerves}, so $\euler\Delta=\euler{\nerve\Delta}$. Taking
the nerve corresponds to transposing the facet-vertex incidence matrix
$M$. Taking the nerve twice will remove dominated columns from $M$ and
then remove any dominated rows that might have appeared. This process
can continue as removing rows of $M$ can cause yet more columns to be
dominated.

If no columns are dominated then taking the nerve once will exchange
the number of vertices and facets. The BCRT algorithm is more
sensitive to the number of vertices than it is to the number of
facets, so it can be beneficial to do computations on the nerve if the
complex has fewer facets than vertices. The DBMS algorithm is opposite
of this in that it is more sensitive to the number of facets than the
number of vertices.

\subsubsection*{Base Case for $\card{\facets\Delta}=3$}
Assume that all abundant vertices have been removed, that $\Delta$ is
not a cone and that $\card{\facets\Delta}=3$. Then every vertex that
actually occurs in the complex is an element of one facet. The nerve
of the nerve of $\Delta$ will then have the form $\clo{\set a, \set b,
  \set c}$. So $\euler\Delta=2$.

\subsubsection*{Partial Base Case for $\card{\facets\Delta}=4$}
Suppose that $\card{\facets\Delta}=4$, that every vertex is an element
of exactly two facets and that the number of vertices is 4. Then
$\euler\Delta=-1$. There should be more rules like this, though
identifying them by hand is laborious and error prone.

\subsubsection*{Make a Table}
It would be beneficial for each small $k$ to perform a computer search
to make a table of all simplicial complexes $\Delta$ with
$\card{\facets\Delta}=k$ up to reordering of the vertices and the
various techniques for simplifying a simplicial complex that we have
presented. Then the Euler characteristic of simplicial complexes with
few facets could be computed through a table look-up.

\subsubsection*{Data Structures}
In our implementation we have represented a facet as a 0-1 vector,
where we pack 32 or 64 entries into a single 32 or 64 bit word. Many
operations are very fast in this representation. While the general
concept is simple we warn that the implementation details are tricky.

For sparse or complement-of-sparse vectors, it might pay off to only
record the zero entries or one entries respectively, though this is
not something that we have pursued.

\section{Ideals to Simplicial Complexes}
\label{sec:translation}

In this section we describe in more detail how the simplicial
algorithms in Section \ref{sec:simAlg} are equivalent to the algebraic
algorithms in Section \ref{sec:algAlg}. We treat this topic in detail
in part to support our claim that the algebraic and simplicial
algorithms are equivalent and in part because the algebra-simplicial
translation brings up interesting mathematics.

To set up the algebra-simplicial translation, let $\Delta$ be a
simplicial complex with vertices $x_1,\ldots,x_n$ that are
simultaneously the variables in a polynomial ring. The
algebra-simplicial connection is via the function $\comtoidSym$ from
sets to monomials defined by
$\comtoid\sigma\defeq\Pi\comple\sigma$. In other words,
$\comtoid\sigma$ is the product of variables (vertices) that are not
in $\sigma$. So if $n=3$ then $\comtoid\emptyset=x_1x_2x_3$ and
$\comtoid{\set{x_1,x_3}}=x_2$.

Extend $\comtoidSym$ from sets to simplicial complexes by
$\comtoid\Delta\defeq\idealBuilder{\comtoid\sigma}{\sigma\in\Delta}.$
Then $\comtoidSym$ is a bijection from the facets of $\Delta$ to the
minimal generators of $\comtoid\Delta$, so $\comtoid\Delta$ can be
quickly computed given $\Delta$. We can also describe $\comtoid\Delta$
as the \emph{Stanley-Reisner ideal} of the \emph{Alexander dual} of
$\Delta$, though we will not use this alternative description of
$\comtoid\Delta$ here.

The fundamental observation that we have been using is that
$\euler{\comtoid\Delta}=\euler\Delta$ so that we can compute
$\euler\Delta$ with monomial ideal methods on $\comtoid\Delta$.

\begin{theorem}
If $\Delta$ is a simplicial complex then
$\euler\Delta=\euler{\comtoid\Delta}$, where $\euler{\comtoid\Delta}$ is the
coefficient of $\varProd$ in the multivariate \hps{} numerator
$\hilbNum I$ of $I$.
\end{theorem}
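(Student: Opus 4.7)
The plan is to compute the coefficient of $\varProd$ in $\hilbNum I$ directly from the definitions, where $I\defeq\comtoid\Delta$, and show the result equals $\euler\Delta$. First I would use the identity $\hilbNum I=\hilb I\cdot(1-x_1)\cdots(1-x_n)$ and expand $\prod_{i=1}^n(1-x_i)=\sum_{S\subseteq[n]}(-1)^{\card S}\prod_{i\in S}x_i$. Combined with $\hilb I=\sum_{m\notin I}m$, extracting the coefficient of the squarefree monomial $\varProd=x_1\cdots x_n$ forces only squarefree pairings to contribute, yielding
\[
\euler I=\sum_{\substack{T\subseteq[n]\\ x_T\notin I}}(-1)^{n-\card T},
\]
where $x_T\defeq\prod_{i\in T}x_i$.

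Next I would translate the membership condition $x_T\in I$ into a simplicial statement. Since the minimal generators of $I=\comtoid\Delta$ are the monomials $\comtoid\sigma=x_{\comple\sigma}$ for $\sigma\in\facets\Delta$, divisibility gives $x_T\in I$ iff $\comple\sigma\subseteq T$ for some facet $\sigma$, iff $\comple T\subseteq\sigma$ for some $\sigma\in\Delta$, iff $\comple T\in\Delta$ (using that $\Delta$ is closed under subsets). Substituting $U\defeq\comple T$, so that $\card T=n-\card U$, the sum rewrites as
\[
\euler I=\sum_{\substack{U\subseteq[n]\\ U\notin\Delta}}(-1)^{\card U}.
\]

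Finally I would split this sum using $\sum_{U\subseteq[n]}(-1)^{\card U}=(1-1)^n=0$ (for $n\geq 1$):
\[
\euler I=-\sum_{U\in\Delta}(-1)^{\card U}=\euler\Delta,
\]
by the definition of $\euler\Delta$ in the introduction. The trivial case $n=0$ can be handled separately.

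The main obstacle is purely bookkeeping: tracking the double complementation (first $S\leftrightarrow T$ from expanding the denominator, then $T\leftrightarrow U$ from the definition of $\comtoidSym$) and verifying that the two signs combine correctly. The algebraic content is light once one notices that the coefficient of a squarefree monomial in a Hilbert-Poincaré series numerator is an inclusion-exclusion count, and that the bijection $\sigma\mapsto\comple\sigma$ turns ``faces of $\Delta$'' into ``monomials divisible by some generator of $I$.''
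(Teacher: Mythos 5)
Your argument is correct, but it is a genuinely different proof from the one in the paper. The paper invokes Bayer's formula $\hilbNum I - 1 = \sum_{v\in\N^n}\euler{\koz{I}{x^v}}x^v$ relating the numerator of the multigraded Hilbert--Poincar\'e series to the Euler characteristics of the upper Koszul complexes of $I$, and then simply observes that $\koz{\comtoid\Delta}{\varProd}=\Delta$; the theorem drops out as the coefficient of $\varProd$ in that identity. You instead compute that single coefficient from scratch: expanding $\hilbNum I=\hilb I\cdot\prod_i(1-x_i)$ and using that $\varProd$ is squarefree to restrict to squarefree standard monomials $x_T\notin I$, then translating $x_T\in I$ into $\comple T\in\Delta$ via the generators $\comtoid\sigma=\Pi\comple\sigma$, and finally using $\sum_{U\subseteq[n]}(-1)^{\card U}=0$ to flip the sum over non-faces into the sum over faces. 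All three steps check out (including the signs: $(-1)^{n-\card T}=(-1)^{\card{\comple T}}$, matching the paper's convention $\euler\Delta=-\sum_{\sigma\in\Delta}(-1)^{\card\sigma}$), and the coefficient extraction is a finite computation, so there is no convergence issue with the formal power series. What your route buys is a short, self-contained, purely elementary proof requiring no knowledge of Koszul complexes; what the paper's route buys is brevity given the cited machinery and a stronger statement lurking in the background, namely an interpretation of \emph{every} coefficient of $\hilbNum I$ as an Euler characteristic, not just the coefficient of $\varProd$.
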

\begin{proof}
The proof is based on a formula of Bayer concerning upper Koszul
simplicial complexes.

Let $m$ be a monomial and let $I$ be a monomial ideal. Then the
\emph{upper Koszul simplicial complex}\footnote{This notion of Koszul
  complex is not to be confused with the chain complex notion of a
  Koszul complex.} \cite{cca,bayerMonomialLectureNotes} of $I$ at $m$
is
\[
\Delta^I_m\defeq
\setBuilder{\sigma\subseteq\set{x_1,\ldots,x_n}}{\frac{x^m}{\Pi\sigma}\in
  I}.
\]
The numerator of the multivariate \hps{} of $I$ is related to the
Euler characteristics of the Koszul complexes of $I$ by the following
formula due to Bayer \cite[Proposition 3.2]{bayerMonomialLectureNotes}
\[
\hilbNum I - 1 =
\sum_{v\in\N^n}\euler{\Delta^I_ {x^v}}x^v.
\]
Since $\koz {\comtoid\Delta} \varProd=\Delta$, we get that
\[
\textrm{Coefficient of $\varProd$ in $\hilbNum{\comtoid\Delta}$} =
\euler {\koz {\comtoid\Delta} \varProd} =
  \euler \Delta.
\qedhere
\]
\end{proof}

\subsection{The Inverse of $\comtoidSym$}
\label{sec:inverseComtoid}

In Section \ref{sec:simBackground} we modified the definition of a
simplicial complex to have an associated set of vertices so that a
vertex need not actually appear in any of the faces of the
complex. Here we give an example that shows that this change is
necessary for $\phi$ to be a bijection between simplicial complexes
and square free monomial ideals. In short, the vertex set definition
is necessary to preserve information about the variables in the
ambient polynomial ring.

We have two functions $\comtoidSym$, one that maps sets to square free
monomials and one that maps ideals to square free monomial
ideals. They are defined by $\comtoid v\defeq\Pi v$ and
$\comtoid\Delta\defeq\idealBuilder{\comtoid v}{v\in\Delta}$
respectively. The inverse of $\Delta\mapsto\comtoid\Delta$ is given by
\[
\idtocom I=\setBuilder{v\subseteq\set{x_1,\ldots,x_n}}{\comtoid v\in I}.
\]
We define the vertex set of $\idtocom I$ to be the set of variables in
the ambient ring of $I$, even if some of those variables do not appear
in any face of $\idtocom I$. We give an example that shows that
$\Delta\mapsto\comtoid\Delta$ would not have an inverse if the vertex
set of $\idtocom I$ were the union of its faces.

Consider the ideal
$I\defeq\ideal{x_1x_2,x_1x_3}\subseteq\kappa[x_1,x_2,x_3]$ and let
$\Delta\defeq\idtocom I$ so that
$\facets\Delta=\set{\set{x_3},\set{x_2}}$. The question now is what
the vertex set of $\Delta$ is. The union of faces is $\set{x_2,x_3}$
so if that were the set of vertices then $\comtoid{\set{x_2}}=x_3$ and
$\comtoid{\set{x_3}}=x_2$ so then $\comtoid{\idtocom
  I}=\comtoid\Delta=\ideal{x_3,x_2}\neq I$. We observe that if the
vertex set were the union of faces, then there would be no simplicial
complex $\Delta$ such that $\comtoid\Delta=I$. So $\comtoidSym$ would
be a bijection not from simplicial complexes to square free monomial
ideals, but from simplicial complexes to square free monomial ideals
with the property that $\gcd(\ming I)=1$.

Using the definition that we have given, the vertex set of $\Delta$ is
still $\set{x_1,x_2,x_3}$ even though $x_1$ does not appear in any
face of $\Delta$. With our definition $\comtoidSym$ does have
$\idtocomSym$ as an inverse, and indeed we see that $\comtoid{\idtocom
  I}=\comtoid\Delta=I$.

Even if the input ideal $I$ to the algebraic BCRT and DBMS algorithms
has the property that $\gcd(\ming I)=1$, the intermediate ideals that
these algorithms generate do not necessarily have that property. So if
we had defined the vertex set to be the union of faces, then we would
have had to deal with this issue in some other way.

\subsection{Divide...}

The algebra algorithms are based on Equation \ref{eqn:alg} which
states that for monomials $p$
\[
\euler I = \euler{I:p} + \euler{I + \ideal p}.
\]
The simplicial algorithms are based on Theorem \ref{thm:com} which
states that for sets $\sigma$
\[
\euler\Delta =
 \euler{\Delta\vsub\comple\sigma} +
 \euler{\Delta\cup\pows\sigma}.
\]
These two equations are equivalent since if $\Delta=\idtocom I$ and
$\sigma=\idtocom p$ then
\[
\idtocom{I:p}=\Delta\vsub\comple\sigma,\quad\quad
\idtocom{I+\ideal p}=\Delta\cup\pows\sigma.
\]

\subsection{... and Conquer}

Theorem \ref{thm:basecase} specifies base cases for the algebra algorithms when
\begin{enumerate}
\item $I$ does not have full support,
\item $I$ has full support and the minimal generators $\ming I$ are
  pairwise prime,
\item $I$ has full support and $\card{\ming I}=2$.
\end{enumerate}
Theorem \ref{thm:simBasecase} specifies base cases for the simplicial
algorithms when
\begin{enumerate}
\item $\Delta$ is a cone,
\item $\Delta$ is not a cone and the facets of $\Delta$ are
  pairwise co-disjoint,
\item $\Delta$ is not a cone and $\card{\facets I}=2$.
\end{enumerate}
If $\Delta=\idtocom I$ then the algebra and simplicial conditions are
equivalent. To be more precise
\begin{enumerate}
\item $I$ has full support if and only if $\Delta$ is not a cone,
\item monomials $a$ and $b$ are relatively prime if and only if
  $\idtocom a$ and $\idtocom b$ are co-disjoint,
\item $\card{\ming I}=\card{\facets\Delta}$.
\end{enumerate}

\subsection{Termination and Complexity}

The algebraic DBMS algorithm terminates as $\card{\ming I}$ decreases
strictly at each step. The simplicial DBMS algorithm terminates as
$\card{\facets\Delta}$ decreases strictly at each step. These reasons
are equivalent when $\Delta=\idtocom I$ as then $\card{\ming
  I}=\card{\facets\Delta}$.

The algebraic BCRT algorithm terminates if the pivots $p$ are chosen
such that $p\neq 1$ and $p\notin I$. The simplicial BCRT algorithm
terminates if the pivots $\sigma$ are chosen such that
$\sigma\subsetneq\vertices\Delta$ and $\sigma\notin\Delta$. If
$\Delta=\idtocom I$ and $\sigma=\idtocom p$ then these conditions are
equivalent.

\subsection{Pivot Selection}

The pivot selection strategies that have the same name in the two
sections on pivot selection are equivalent. The main points in proving
this are that for a monomial ideal $I$ and a simplicial complex
$\Delta\defeq\idtocom I$
\begin{itemize}
\item a variable $x_i$ is rare if and only if the vertex $x_i$ is popular,
\item a variable $x_i$ is popular if and only if the vertex $x_i$ is rare,
\item a minimal generator $m$ has maximum support if and only if $\idtocom m$
  has minimum size,
\item a minimal generator $m$ has minimum support if and only if $\idtocom m$
  has maximum size.
\end{itemize}

\subsection{Improvements}

Let $I$ be a square free monomial ideal and let $\Delta\defeq\idtocom
I$. The two sections on improvements present equivalent techniques in
the same order.

\subsubsection*{Independent Variables --- Independent Vertices}
Let $A$ and $B$ be disjoint sets of variables. Let
$I\subseteq\kappa[A]$ and $\quad J\subseteq\kappa[B]$. Then
$I+J\subseteq\kappa[A\cup B]$ and the algebraic section on independent
variables proves that $\euler{I+J}=\euler I\euler J$. Furthermore, an
ideal $K$ can be written as a sum $I+J$ if and only if there are
disjoint sets $A$ and $B$ such that $\ming K$ is the union of $\ming
K\cap\kappa[A]$ and $\ming K\cap\kappa[B]$. In that case we say that
$A$ and $B$ are $K$-independent and then $K=I+J$ for
$I\defeq\ideal{\ming K\cap\kappa[A]}$ and $J\defeq\ideal{\ming
  K\cap\kappa[B]}$.

For the simplicial side of things, let $\Delta$ and $\Gamma$ be
simplicial complexes such that the vertex sets $\vertices\Delta$ and
$\vertices\Gamma$ are disjoint. Then 
\[
\Delta\simProd\Gamma\defeq
(\Delta\times\pows{\vertices\Gamma}) \cup
(\pows{\vertices\Delta}\times\Gamma)
\quad\text{and}\quad
\vertices{\Delta\simProd\Gamma}\defeq
\vertices\Delta\cup\vertices\Gamma
\]
The simplicial section on independent vertices proves that
$\euler{\Delta\simProd\Gamma}=\euler\Delta\euler\Gamma$. Furthermore,
an ideal $\Psi$ can be written as $\Delta\simProd\Gamma$ if and only
if there are disjoint sets $A$ and $B$ such that $\facets\Psi$ is the
disjoint union of
\[
F_A\defeq
\setBuilder{\sigma\in\facets\Psi}{\sigma\setminus A=\comple A}
\ \text{ and }\ 
F_B\defeq
\setBuilder{\sigma\in\facets\Psi}{\sigma\setminus B=\comple B}.
\]
In that case $\Psi=\Delta\simProd\Gamma$ for
$\Delta\defeq\clo{F_A}$ and $\Gamma\defeq\clo{F_B}$.

Addition of ideals and $\simProd$ of simplicial complexes are
equivalent. If $\Delta\defeq\idtocom I$ and $\Gamma\defeq\idtocom J$
then $\Delta\simProd\Gamma=\idtocom{I+J}$. Observe that we are using
three different functions $\idtocomSym$ here as the ambient polynomial
ring is different for each of them. Recall that the definition of
$\idtocomSym$ involves taking a complement, and if $v$ is a set of
variables then the meaning of the complement $\comple v$ depends on
what the variables in the ambient polynomial ring are.

Independence of variables and vertices are also equivalent. If
$\Psi\defeq\idtocom K$ and $A$ and $B$ are sets of variables/vertices,
then $A$ and $B$ are $K$-independent if and only if $A$ and $B$ are
$\Psi$-independent.

Let $A$ and $B$ be $K$-independent and let $\Psi=\idtocom K$ so that
$A$ and $B$ are also $\Psi$-independent. Let $\Delta\defeq\ideal{F_A}$
and $\Gamma\defeq\ideal{F_B}$. Let $I\defeq\ideal{\ming
  K\cap\kappa[A]}$ and $J\defeq\ideal{\ming K\cap\kappa[B]}$. Then
$K=I+J$ and $\Psi=\Delta\simProd\Gamma$. Furthermore,
$\Delta=\idtocom I$ and $\Gamma=\idtocom J$.

The notion of $I$-independence is standard, though we have not been
able to find any reference in the literature to $\Psi$-independence or
the operation $\simProd$ on simplicial complexes.

\subsubsection*{Eliminate Unique Variables --- Eliminate Abundant Vertices}
A variable $x_i$ is unique in $I$ if and only if $x_i$ is an abundant
vertex of $\Delta$.

\subsubsection*{Transpose Ideals --- Take Nerves}
Both the nerve and the transpose of an ideal are transposing a matrix
where the columns are variables/vertices and the rows are
generators/facets. One of these matrices can be derived from the other
by replacing all 0's by 1 and simultaneously replacing all 1's by
0. This shows that $\idtocom{\idtrans I} = \nerve\Delta$.

As far as we know, the transpose operation has not been applied to
monomial ideals before -- we are investigating if a monomial ideal and
its transpose have any interesting relations between them when the
ideal is not square free.

\subsubsection*{Base Case for $\card{\ming I}=3$ --- Base Case for $\card{\facets\Delta}=3$}
These are equivalent as $\card{\ming I}=\card{\facets\Delta}$ and both
base cases give the same Euler characteristic of 2.

\subsubsection*{Partial Base Case for $\card{\ming I}=4$ --- Partial Base Case for $\card{\facets\Delta}=4$} These are equivalent.

\subsubsection*{Make a Table} This is the same idea.

\subsubsection*{Data Structures}
These are the same considerations. Sparse exponent vectors correspond
to large facets, while small facets correspond to mostly-one exponent
vectors. In particular, it is not the case that sparse exponent
vectors correspond to small facets.

\section{Empirical Evaluation of Euler Characteristic Algorithms}
\label{sec:bench}

We have implemented the BCRT and DBMS algorithms for Euler
characteristic in the program \Frobby{} \cite{frobby}. In this
section we explore the pivot selection strategies for both algorithms,
and we compare the BCRT and DBMS implementations to several other
systems with Euler characteristic functionality.

We use simplicial terminology in this section. To recover equivalent
statements using algebraic terminology read ``monomial ideal'' for
``complex'', read ``variables'' for ``vertices'', read ``minimal
generators'' for ``facets'', read ``having not full support'' for
``being a cone'' and read ``transpose'' for ``nerve''.

We compare the following implementations, listed in alphabetical order.

\begin{description}
\item[\Frobby{} version 0.9.0 \cite{frobby}] \Frobby{} is a free and
  open source system for computations on monomial ideals. We have
  written a C++ implementation of the BCRT and DBMS algorithms for
  Euler characteristic in \Frobby{}. We ran Frobby with the option
  {\tt -minimal} turned on in Table \ref{tab:allPrograms} as all faces
  given in the input are facets.

\item[\GAP{} version 4.4.12 \cite{gap}] \GAP{} is a free and open
  source system for computational discrete algebra. It computes the
  Euler characteristic of a complex by enumerating all faces. The
  implementation is written in the \GAP{} scripting language. We use
  the {\tt SCEulerCharacteristic} function from version 1.4.0 of the
  {\tt simpcomp} package \cite{simpcomp}. We extend the memory limit
  for \GAP{} to 2 GB.

\item[\Mtwo{} version 1.4 \cite{m2}] \Mtwo{} is a free and open
  source computer algebra system. It computes the Euler characteristic
  of a simplicial complex using the BCRT algorithm for \hps{} to get
  the $f$-vector. The time consuming parts of the code are written in
  C++. Due to some inefficiencies in the {\tt SimplicialComplexes}
  package it is faster to compute the Euler characteristic using the
  {\tt poincare} function directly on a monomial ideal instead of
  going through a simplicial complex. We have used this faster method.

\item[\Sage{} version 4.7 \cite{sage}] \Sage{} is a free and open
  source computer algebra system. It computes the Euler characteristic
  of a simplicial complex by enumerating all faces of the complex. The
  implementation is written in Python. We use the {\tt
    eulerCharacteristic} function from the {\tt SimplicialComplexes}
  package.
\end{description}

We use the following complexes for the comparison.

\begin{description}
\item[random-$a$-$b$] A randomly generated complex with $a$ vertices
  and $b$ facets. The complex is generated one facet at a time. A
  prospective facet $\sigma$ is generated at random so that each
  vertex has a 50\% chance to be an element of $\sigma$. If $\sigma$
  is contained in or contains any previously generated facet then
  $\sigma$ is discarded and another prospective facet is generated in
  its stead.

  We generated these example using the {\tt genideal} action of
  \Frobby{}.

\item[nicgraph-$a$-$b$] The simplicial complex of all not
  $b$-connected graphs on a given set of $a$ vertices. Here we view
  each possible edge as a vertex of the simplicial complex. A graph is
  $b$-connected if it is connected and it cannot be disconnected by
  removing $b-1$ edges.

  We genenerated these examples with the {\tt NotIConnectedGraphs}
  function in \Sage{}.

\item[rook-$a$-$b$] The $a\times b$ \emph{chessboard complex}.

  Let $V$ and $W$ be sets of vertices such that $\card V=a$ and $\card
  W=b$. A \emph{partial matching} between $V$ and $W$ is a set of
  pairs $(v,w)$ with $v\in V$ and $w\in W$ such that each vertex is in
  at most one pair. The set of partial matchings between $V$ and $W$
  forms a simplicial complex called the \emph{chessboard complex}. It
  corresponds to ways of placing rooks on an $a\times b$ chessboard so
  that no rook attacks any other rook.

  We initially generated these examples with the {\tt
    ChessboardComplex} function in \Sage{}. We could not generate
  large enough examples with that function so we added the same
  functionality to \Frobby{} and used that.

\item[match-$a$] The matching complex on $a$ vertices.

  Let $V$ be a set of vertices with $\card v=a$. A \emph{partial
    matching} on $V$ is a set of pairs $\set{v,w}$ for $v,w\in V$ such
  that each vertex is in at most one pair. The set of partial
  matchings on $V$ forms a simplicial complex called the
  \emph{matching complex}.

  We initially generated these examples with the {\tt MatchingComplex}
  function in \Sage{}. We could not generate large enough examples
  with that function so we added the same functionality to \Frobby{}
  and used that.
\end{description}

In all tables $\card{\vertices\Delta}$ refers to the number of
vertices and $\card{\facets\Delta}$ refers to the number of
facets. OOM stands for ``out of memory'' and indicates that the
computation was terminated due to the program reporting an out of
memory error. All experiments were run on an Intel\textregistered{}
Core\texttrademark{} 2 Duo CPU T7500 at 2.20GHz with 4GB of RAM
running Mandriva Linux 2010.1.

We checked that all the programs gave the correct answer for every
input. All times are the median time of three runs. Time is in each
case measured by the programs themselves, except for \Frobby{} that was
timed using the {\tt Unix} time command line utility. All times
exclude the time taken to start the program and read the input file,
except for the \Frobby{} times that do include startup and input time. We
had to cut out startup and input time as it was taking more time than
the Euler characteristic computation itself in some cases, and these
experiments are not intended to be about starting programs or reading
input. We did not do this for \Frobby{} as \Frobby{} starts up and reads
input in a tiny fraction of the time taken for the Euler
characteristic computation.

\subsection{Pivot Selection Strategies for the BCRT and DBMS Algorithms}

The BCRT and DBMS algorithms are parameterized on a pivot selection
strategy that determines which pivot to use at each step. We compare
the strategies described in Section \ref{sec:algPivots}/Section
\ref{sec:simPivots} on a battery of randomly generated complexes.

\subsubsection*{BCRT Pivots}

Table \ref{tab:bcrtPivotNerve} shows that \pname{popvar} is
the best BCRT pivot selection strategy for these randomly generated
complexes. \pname{popvar} is already faster for the simpler ideals and
its lead over the other strategies increases with the number of facets
and especially with the number of vertices.

\begin{description}
\item[random] This strategy is here to be able to tell if the other
  strategies are better or worse than a random choice of vertex.
\item[popvar] We believe that \pname{popvar} performs well because a
  rare vertex $e$ gives $\Delta\cup\ideal{\comple{\set e}}$ as few facets as
  possible.
\item[rarevar] \pname{rarevar} does the opposite of what
  \pname{popvar} does. It is never far ahead and is mostly
  significantly behind \pname{popvar}. For the balanced examples it is
  much worse than even a random choice of vertex.
\item[popgcd] \pname{popgcd} works well for the \hps{} algorithms that
  the algorithms we present here are based on. However, the data shows
  that it is often significantly worse than choosing a random vertex
  when it comes to computing Euler characteristics.
\end{description}

\begin{table}
\centering
\begin{tabular}{rrrrrrrr}
$\card{V_\Delta}$ & $\card{\facets\Delta}$ & \tt random & \tt popvar & \tt rarevar & \tt popgcd \\
\hline
4000 & 20 & {\bf 0.05} & {\bf 0.05} & 0.06 & 0.06\\
6000 & 20 & {\bf 0.07} & {\bf 0.07} & {\bf 0.07} & {\bf 0.07} \\
8000 & 20 & 0.09 & 0.09 & {\bf 0.08} & 0.09 \\
10000 & 20& {\bf 0.11} & {\bf 0.11} & {\bf 0.11} & 0.12 \\
\hline
4000 & 30 & 3.70 & {\bf 2.31} & 5.73 & 9.83 \\
6000 & 30 & 4.84 & {\bf 3.25} & 6.31 & 13.41\\
8000 & 30 & 5.87 & {\bf 4.18} & 7.61 & 15.56 \\
10000 & 30& 7.25 & {\bf 5.34} & 9.59 & 20.53\\
\hline
20 & 4000 & 0.13 & {\bf 0.12} & {\bf 0.12} & 0.24 \\
20 & 6000 & 0.24 & 0.24 & {\bf 0.22} & 0.45 \\
20 & 8000 & 0.37 & 0.37 & {\bf 0.36} & 0.65 \\
20 & 10000 & 0.55 & 0.55 & {\bf 0.52} & 0.95\\
\hline
30 & 4000 & 4.11 & {\bf 2.63} & 5.78 & 14.53\\
30 & 6000 & 6.04 & {\bf 4.06} & 7.69 & 22.01\\
30 & 8000 & 7.48 & {\bf 5.37} & 9.11 & 28.85\\
30 & 10000 & 9.48 & {\bf 6.99} & 10.83 & 33.41\\
\hline
100 & 100 & 7.07 & {\bf 1.38} & 40.15 & 166.56\\
120 & 120 & 32.01 & {\bf 5.92} & 170.28 & 2109.31\\
140 & 140 & 130.44 & {\bf 23.19} & 740.48 & $>$7200\\
160 & 160 & 491.43 & {\bf 89.87} & 2729.22 & $>$7200\\
180 & 180 & 1497.50 & {\bf 261.09} & $>$7200 & $>$7200\\
200 & 200 & 4965.33 & {\bf 796.03} & $>$7200 & $>$7200\\
220 & 220 & $>$7200 & {\bf 1756.16} & $>$7200 & $>$7200\\
240 & 240 & $>$7200 & {\bf 5051.55} & $>$7200 & $>$7200\\
\smallskip
\end{tabular}
\caption{BCRT pivot selection strategies. All times are in seconds.}
\label{tab:bcrtPivotNerve}
\end{table}

\subsubsection*{DBMS Pivots}

Table \ref{tab:dbmsPivotNerve} shows that \pname{raremax} is the
best DBMS pivot selection strategy for these randomly generated
complexes.

\begin{description}
\item[random] This strategy is here to be able to tell if the other
  strategies are better or worse than a random choice of pivot.
\item[raremax] \pname{raremax} combines the benefits of removing a
  facet that lacks a popular vertex with the benefit of removing a
  small facet. Table \ref{tab:dbmsPivotNerve} shows that this
  combination is better than the pure strategies \pname{rarevar} and
  \pname{maxsupp}.
\item[rarest] We believe that \pname{rarest} performs well because
  removing facets that lack popular vertices tends to create
  complexes that are close to being a cone.
\item[rarevar] \pname{rarevar} is almost as good as \pname{rarest},
  which is reasonable since \pname{rarest} is a more sophisticated way
  of breaking ties in \pname{rarevar}.
\item[popvar] \pname{popvar} does the opposite of what
  \pname{rarevar} does, so it is reasonable that it does
  poorly.
\item[maxsupp] We believe that \pname{maxsupp} performs better than
  \pname{random} because removing small facets tends to create
  complexes that are close to being a cone.
\item[minsupp] \pname{minsupp} performs worse than \pname{random},
  confirming that it is beneficial to select small pivots.
\end{description}

\begin{table}
\centering
\begin{tabular}{rrrrrrrrr}
$\card{\vertices\Delta}$ & $\card{\facets\Delta}$ &
\pname{random} & \pname{popvar} &
\pname{maxsupp} & \pname{rarest} & \pname{raremax} & \pname{minsup} & \pname{rarevar} \\
\hline
4000 & 20 & 0.07 & 0.06 & 0.06 & {\bf 0.04} & {\bf 0.04} & 0.08 & {\bf 0.04}\\
6000 & 20 & 0.08 & 0.08 & 0.07 & 0.06 & 0.06 & 0.09 & {\bf 0.05}\\
8000 & 20 & 0.09 & 0.09 & 0.08 & {\bf 0.07} & {\bf 0.07} & 0.11 & {\bf 0.07} \\
10000 & 20 & 0.11 & 0.11 & 0.10 & {\bf 0.09} & {\bf 0.09} & 0.15 & 0.10 \\
\hline
4000 & 30  & 4.53 & 4.46 & 2.36 & 1.14 & {\bf 1.06} & 13.34 & 1.24\\
6000 & 30  & 5.96 & 6.77 & 3.58 & 1.47 & {\bf 1.46} & 15.62 & 1.63 \\
8000 & 30  & 7.83 & 8.28 & 4.46 & 1.97 & {\bf 1.90} & 21.17 & 2.26 \\
10000 & 30 & 10.50 & 10.18 & 5.64 & 2.59 & {\bf 2.53} & 27.11 & 3.24 \\
\hline
20 & 4000  &0.19 & 0.18 & 0.16 & {\bf 0.12} & {\bf 0.12} & 0.22 & {\bf 0.12} \\
20 & 6000  & 0.32 & 0.31 & 0.29 & {\bf 0.23} & 0.24 & 0.39 & 0.24\\
20 & 8000  & 0.48 & 0.48 & 0.46 & {\bf 0.37} & {\bf 0.37} & 0.57 & 0.38 \\
20 & 10000 & 0.66 & 0.65 & 0.63 & {\bf 0.52} & 0.54 & 0.74 & 0.54 \\
\hline
30 & 4000  & 5.08 & 4.97 & 3.10 & 1.88 & {\bf 1.76} & 10.68 & 2.06 \\
30 & 6000  & 7.70 & 7.44 & 4.81 & 2.97 & {\bf 2.85} & 16.25 & 3.29 \\
30 & 8000  & 10.16 & 10.02 & 6.46 & 3.99 & {\bf 3.98} & 19.61 & 4.48 \\
30 & 10000 & 12.63 & 13.17 & 8.90 & 5.00 & {\bf 4.98} & 25.35 & 5.58 \\
\hline
100 & 100  & 5.89 & 5.84 & 1.29 & 1.05 & {\bf 0.79} & 40.83 & 1.13\\
120 & 120  & 25.42 & 26.19 & 5.79 & 4.72 & {\bf 3.44} & 173.82 & 5.06 \\
140 & 140  & 108.40 & 119.19 & 22.43 & 18.43 & {\bf 13.03} & 749.66 & 21.34 \\
160 & 160 & 427.46 & 464.28 & 87.13 & 69.92 & {\bf 49.59} & 2767.93 & 78.52 \\
180 & 180  & 1291.26 & 1241.91 & 253.15 & 192.95 & {\bf 143.14} & $>$7200 & 232.63 \\
200 & 200  & 4211.65 & 4137.68 & 769.79 & 614.39 & {\bf 434.43} & $>$7200 & 755.41\\
220 & 220  & $>$7200 & $>$7200 & 1704.65 & 1440.19 & {\bf 980.31} & $>$7200 & 1665.74 \\
240 & 240 & $>$7200 & $>$7200 & 4871.15 & 4022.04 & {\bf 2697.84} & $>$7200 & 4811.50\\
\smallskip
\end{tabular}
\caption{DBMS pivot selection strategies. All times are in seconds.}
\label{tab:dbmsPivotNerve}
\end{table}

\subsection{Variations of the BCRT and DBMS Algorithms}

Table \ref{tab:bcrtPivotWithoutNerve} and Table
\ref{tab:dbmsPivotWithoutNerve} show times for the pivot selection
strategies when the nerve technique from Section
\ref{sec:algImp}/Section \ref{sec:simImp} is turned off. First of all
we observe that turning the nerve technique off does not change the
ranking of the pivot selection strategies. Furthermore, we see that
without nerves the BCRT algorithm is much more sensitive to the number
of vertices while the DBMS technique is much more sensitive to the
number of facets. The nerve technique hides this sensitivity as it
allows to interchange the number of facets and the number of vertices,
so the DBMS algorithm can adjust the input to make the number of
facets less than the number of vertices and vice versa for the BCRT
algorithm.

An alternative to the nerve technique is to use a hybrid approach
where the BCRT algorithm is used for ideals with more facets and the
DBMS algorithm is used for ideals with more vertices. If we compare
the tables we see that the DBMS algorithm with the nerve technique
turned on is faster than the hybrid approach even for ideals with more
facets than vertices, so in this experiment the hybrid approach is
inferior to the nerve technique.

\begin{table}
\centering
\begin{tabular}{rrrrrr}
$\card{V_\Delta}$ & $\card{\facets\Delta}$ & \tt random & \tt popvar & \tt rarevar & \tt popgcd \\
\hline
30 & 10000 &7.50 & {\bf 5.36} & 9.35 & 30.24 \\
10000 & 30 & $>$7200 & {\bf 62.13} & $>$7200 & $>$7200 \\
240 & 240 & $>$7200 & \bf 6961.76 & $>$7200 & $>$7200 \\
\smallskip
\end{tabular}
\caption{BCRT pivot selection strategies without nerves. All times are in seconds.}
\label{tab:bcrtPivotWithoutNerve}
\end{table}

\begin{table}
\centering
\begin{tabular}{rrrrrrrrr}
$\card{\vertices\Delta}$ & $\card{\facets\Delta}$ &
\pname{random} &
\pname{raremax} & \pname{rarest} & \pname{rarevar} & \pname{popvar} &
\pname{maxsupp} & \pname{minsupp} \\
\hline
30 & 10000 & 318.41 & \bf 52.39 & 64.01 & 77.87 & 243.41 & 71.85 & 2382.40 \\
10000 & 30 & 10.64 & \bf 2.58 & 2.66 & 3.26 & 11.30 & 5.97 & 25.95 \\
240 & 240 & $>$7200 & \bf 4184.07 & 6212.34 & $>$7200 & $>$7200 & $>$7200 & $>$7200 \\
\smallskip
\end{tabular}
\caption{DBMS pivot selection strategies without nerves. All times are in seconds.}
\label{tab:dbmsPivotWithoutNerve}
\end{table}

\subsection{Comparison of Euler Characteristic Implementations}

Table \ref{tab:allPrograms} compares several implementations of Euler
characteristic algorithms. This also serves as a comparison of the
algorithms used by these implementations. Evaluating the practicality
of an algorithm as opposed to an implementation is difficult because
quality of implementation has a significant effect on performance yet
quality of implementation cannot easily if at all be measured or
corrected for.

An example of an implementation (as opposed to algorithm) difference
is that Frobby and Macaulay 2 are written in C++ that compiles to
native code while Gap and Sage do not compile to native code due to
the languages that they are written in.\footnote{Sage does have
  modules written in Cython which is similar to Python but that does
  compile to native code. Many components of both Sage and Gap are
  written in native languages such as C and C++. However, that is not
  the case for the Euler characteristic components of Gap and
  Sage. Parts of Macaulay 2 are written in the interpreted Macaulay 2
  language, but the \hps{} code is written in C++.} While the choice
of implementation language can make a significant difference for
performance, the magnitude of the differences in Table
\ref{tab:allPrograms} is so large that choice of language is unlikely
to be the main factor in our estimation.

We can draw the firm conclusion from Table \ref{tab:allPrograms} that
the BCRT and DBMS algorithms presented in this paper can be
implemented to be faster on this set of complexes than all the other
Euler characteristic implementations that we have compared. We have
found no faster Euler characteristic implementations than these, so we
believe that the comparison we have made is the most fair and
informative comparison that can be made using the implementations that
exist today.

From Table \ref{tab:bcrtPivotNerve}, Table \ref{tab:dbmsPivotNerve}
and Table \ref{tab:allPrograms}, we see that the DBMS algorithm is
faster than the BCRT algorithm for all the complexes when both
algorithms use their best pivot selection strategy. So we can
recommend using the DBMS algorithm over the BCRT algorithm, even
though the difference is slight. The nerve technique is vital to make
the DBMS algorithm always be faster -- without it, good performance
could only be reached by implementing both algorithms and choosing
which to use based on the ratio of facets to vertices.

The implementations in \Sage{} and \GAP{} are based on enumerating
faces. Table \ref{tab:data} shows the number of faces of each complex,
and there is a clear trend that the time taken by \Sage{} and \GAP{} is
related to the number of faces of the complex. In contrast the times
for the implementations in \Frobby{} and \Mtwo{} do not have a clear
relationship to the number of faces.

For example nicgraph-8-2 has 2928 times more faces than match-13 does,
and as expected \Sage{} and \GAP{} take much longer to compute the
Euler characteristic of nicgraph-8-2 than of match-13. In contrast
\Frobby{} computes the Euler characteristic of nicgraph-8-2 in 0.06s
while it takes 7.84s for match-13. So nicgraph-8-2 has 2928 times more
faces than match-13 does yet it takes 260 times less time to compute
its Euler characteristic using \Frobby{}. We give this as evidence
that the time taken by the BCRT and DBMS algorithms for Euler
characteristic depends more on the structure of the complex than on
the number of faces of the complex.

We find that the performance of Frobby and \Mtwo{} in this comparison
lends credence to the idea of using algebraic formulations and
implementations for combinatorial problems.

\begin{table}
\centering
\begin{tabular}{lrrrr}
Example&Vertices&Facets&Faces& $\euler\Delta$ \\
\hline
rook-6-6 & 36 & 720 & 13,327 & 185 \\
rook-7-7 & 49 & 5,040 & 130,922 & -204 \\
rook-8-8 & 64 & 40,320 & 1,441,729 & -6,209 \\
\hline
match-9 & 36 & 945 & 2,620 & -28 \\
match-10 & 45 & 945 & 9,496 & -1,216 \\
match-11 & 55 & 10,395 & 35,696 & -936 \\
match-12 & 66 & 10,395 & 140,152 & 12,440 \\
match-13 & 78 & 135,135 & 568,503 & 23,672 \\
\hline
nicgraph-7-2 & 21 & 217 & 1,014,888 & -120 \\
nicgraph-8-2 & 28 & 504 & 166,537,616 & -720 \\
nicgraph-9-2 & 36 & 1,143 &  50,680,432,112& -5,040 \\
\hline
randomv20g100 & 20 & 100 & 86,116 & -25 \\
randomv20g500 & 20 & 500 & 227,792 & 1,166 \\
randomv20g1000& 20 & 1,000& 287,689 & -1,007 \\
randomv25g100 & 25 & 100 & 1,223,224 & -202 \\
randomv25g500 & 25 & 500 & 3,628,979 & -3,815 \\
randomv25g1000& 25 & 1,000& 5,368,430 & 3,666 \\
\smallskip
\end{tabular}
\caption{Characteristics of the examples used in Table \ref{tab:allPrograms}.}
\label{tab:data}
\end{table}

\begin{table}
\begin{threeparttable}
\centering
\begin{tabular}{lrrrrr}
&  \Frobby{} & \Frobby{} &  &  &  \\
Example & DBMS & BCRT & \Sage{} & \Mtwo{} & \GAP{}\\
\hline
rook-6-6 & {\bf 0.01}& {\bf 0.01} & 1.04 & 0.24 & 0.13\\
rook-7-7 & {\bf 0.13} & 0.14 & 12.59 & 3.37 & 3.86\\
rook-8-8 & {\bf 2.43} & 6.39 & $>$223.11\tnote{*} & 58.16 & $>$7200\\
\hline
match-9 & {\bf 0.00} & {\bf 0.00} & 0.16 & 0.13 & 0.08\\
match-10 & {\bf 0.02} & {\bf 0.01} & 0.69 & 0.29 & 0.12\\
match-11 & 0.21 & {\bf 0.15} & 2.90 & 2.41 & 6.97\\
match-12 & {\bf 0.33} & 0.47 & 12.31 & 8.32 & 9.15\\
match-13 & {\bf 7.84} & 11.26 & $>$7200 & 101.49 & 2401.58\\
\hline
nicgraph-7-2 & {\bf 0.00} & {\bf 0.00} & 322.94 & 0.43 &22.41\\
nicgraph-8-2 & {\bf 0.03} & 0.06 & $>$7200 & 10.33 &$>$7200\\
nicgraph-9-2 & {\bf 0.40} & 0.65 & $>$7200 & 306.28 &$>$7200\\
\hline
randomv20f100 & {\bf 0.00} & {\bf 0.00} & 11.37 & 0.07 &0.42\\
randomv20f500 & {\bf 0.01} & {\bf 0.01} & 35.99 & 0.43 & 3.28\\
randomv20f1000 & {\bf 0.02} & {\bf 0.02} & 47.32 & 0.72 & 7.60\\
randomv20f100 & {\bf 0.00} & {\bf 0.00} & 322.80 & 0.37 &58.17\\
randomv20f500 & {\bf 0.02} & 0.04 & $>$7200 & 3.16 & 592.70\\
randomv20f1000 & {\bf 0.02} & 0.04 & $>$7200 & 6.43 & $>$7200\\
\smallskip
\end{tabular}
\begin{tablenotes}
\item [*] Sage reports taking 223.11s on rook-8-8, but the actual time
  was in excess of half an hour. The discrepancy persisted across
  several runs. We do not know how to explain the discrepancy since
  the times that Sage reports is usually in line with external
  measurements.
\end{tablenotes}
\end{threeparttable}
\ \\
\ \\
\caption{Comparison of Euler characteristic implementations. All times are in seconds.}
\label{tab:allPrograms}
\end{table}

\bibliographystyle{plain} \bibliography{references}

\end{document}